\newtheorem{definition}{Definition}
\newtheorem{corollary}{Corollary}
\newtheorem{theorem}{Theorem}
\newtheorem{lemma}{Lemma}
\newtheorem{proposition}{Proposition}
\begin{document}


\title{Qubit stabilizer states are complex projective 3-designs}

\author{Richard Kueng}
\affiliation{Institute for Theoretical Physics, University of Cologne}
\affiliation{Institute for Physics $\&$ FDM, University of Freiburg}

\author{David Gross}
\affiliation{Institute for Theoretical Physics, University of Cologne}
\affiliation{Institute for Physics $\&$ FDM, University of Freiburg}

\date{\today}


\begin{abstract}
	A \emph{complex projective $t$-design} is a configuration of vectors
	which is ``evenly distributed'' on a sphere in
	the sense that
	sampling uniformly from it reproduces the moments of Haar
	measure up to order $2t$.	We show that the set of all $n$-qubit
	stabilizer states forms a complex projective $3$-design in dimension
	$2^n$. Stabilizer states had previously only been known to
	constitute $2$-designs. 
	The main technical ingredient is a general recursion formula for the
	so-called
	\emph{frame potential} of stabilizer states. To establish it, we need to
	compute the number of stabilizer states with pre-described inner
	product with respect to a reference state. This, in turn, reduces to
	a counting problem in discrete symplectic vector spaces for which we
	find a simple formula. We sketch applications in quantum information
	and signal analysis.
\end{abstract}


\maketitle

\section{Introduction and main results}

\subsection{Introduction}

In its simplest incarnation, a $D$-dimensional \emph{complex projctive $t$-design} is a set of unit-length vectors in $\mathbbm{C}^D$ that is
evenly distributed on the sphere in the sense that sampling uniformly
from this set reproduces the moments of Haar measure up to order
$2t$ \cite{delsarte_spherical_1977,renes_symmetric_2004,ambainis_wise_2007,scott_tight_2006,
matthews_distinguishability_2009}
(see \autoref{def:designs} below for a precise definition).
In a variety of contexts such a design structure is important:

In \emph{numerical integration},
designs are known as \emph{cubatures}. It follows from the definition
that the average of a homogeneous polynomial $p$ of order $2t$ over the complex
unit sphere equals $p$'s average over the design. If the design has
small order, this realization can be made the basis for fast numerical
procedures that compute integrals of smooth functions over
high-dimensional spheres.

In \emph{quantum information theory}, designs are a widely-employed
tool for \emph{derandomizing  probabilistic constructions}. 
Recall that the probabilistic method
\cite{alon_probabilistic_2004} 
is a powerful proof technique originally designed to tackle problems
in combinatorics.  At its core is the observation that the existence
of certain extremal combinatorial structures often can be be proved by
showing that a suitably chosen random construction would produce an
example with high probability. 
In quantum information, randomized construction often rely on randomly
chosen Hilbert space vectors \cite{hayden_randomizing_2004}.
While this method has brought about spectacular successes (such as the
the celebrated proof of strict sub-additivity of entanglement of
formation \cite{hastings_superadditivity_2009}),
it suffers e.g.\ from the problem that generic Haar-random states of
large quantum systems are \emph{unphysical}: they cannot be prepared
from separable inputs using a polynomial number of operations
\cite{nielsen_quantum_2010}.
Designs, in contrast, \emph{can} be chosen to consist solely of
highly-structured and efficiently preparable vectors, while retaining
``generic'' properties in a precise sense.
Thus considerable efforts have been expended at designing complex
projective designs (and their unitary cousins)
\cite{ambainis_wise_2007,dankert_exact_2009,gross_evenly_2007,low_large_2009,brandao_local_2012}.

Lastly, randomized constructions in Hilbert spaces have completely
classical applications, e.g.\ in \emph{signal analysis}. Take for
instance the highly active field of compressed sensing and related
topics \cite{fora13}: 
There, one is interested in reconstructing objects that possess some
non-trivial structure (e.g.\ sparsity, or low rank) from a small
number of linear measurements.
Strong recovery guarantees can be proven for randomly constructed
measurement vectors. Once more, this raises the problem of finding
sets of structured and well-understood measurements that sufficiently
resemble the properties of generic random vectors. The use of designs
for this purpose has been proposed in
\cite{gross_partial_2015,ehler_phase_2015,kueng_spherical_2015}.

Despite this wealth of applications and non-constructive existence proofs \cite{bondarenko_spherical_2010}, explicit constructions for
complex designs remain rare. There are varios infinite families of
complex projective 2-designs (e.g.\ maximal sets of mutually unbiased
bases \cite{klappenecker_mutually_2005, bengtsson_geometry_2006}, stabilizer
states, or symmetric informationally complete POVMs
\cite{renes_symmetric_2004}); sporadic solutions for higher orders
\cite{conway_sphere_2013, bachoc_modular_2001, gross_evenly_2007}; and approximate
constructions involving random circuits \cite{brandao_local_2012}. To the best of our
knowledge, an infinite set of explicit complex projective 3-designs
has not been identified before.

Here, we show that the set of all \emph{stabilizer states} in
dimension $2^n$ forms a complex projective 3-design for all
$n\in\mathbbm{N}$.

Recall that the stabilizer formalism is a ubiqutous tool in quantum
information theory \cite{gottesman_stabilizer_1997,nielsen_quantum_2010}. Stabilizer states (and,
slightly more general, stabilizer codes) are joint eigenvectors of
generalized Pauli matrices. Constituting the main realization of
quantum error correcting codes \cite{gottesman_stabilizer_1997}, they can be
efficiently prepared \cite{hostens_stabilizer_2005} and described in terms of
polynomially many parameters \cite{nielsen_quantum_2010}. Yet they exhibit
non-trivial properties like multi-partite entanglement
\cite{hein_entanglement_2006}. Stabilizer states were instrumental
in the development of measurement-based quantum computation
\cite{raussendorf_one_2001, gross_novel_2007}. In several precise ways,
they can be seen as the discrete analogue of Gaussian states
\cite{gross_hudsons_2006}. Beyond quantum information, stabilizer
states have proved to be versatile enough to provide powerful models
for one of the most influential recent development in theoretical
condensed mater physics: the study of topological order
\cite{kitaev_fault_2003,zeng_quantum_2015}.

Our main result thus identifies yet another aspect according to which
stabilizer states capture properties of generic state vectors.

\subsection{Designs and frame potential}

In order to state our results more precisely, we need to give a formal
definition of complex projective designs and introduce the related
notion of \emph{frame potential}. Following
\cite{scott_tight_2006,Levenshtein_designs_1998,koenig_cubature_1999},
we define

\begin{definition} \label{def:designs}
	Fix a dimension $D$ and let $\mu$ be a probability measure on the
	unit sphere in $\mathbbm{C}^D$. The measure $\mu$ is a \emph{complex
	projective $t$-design} if, for any order-$t$ polynomial $p$, we have
	\begin{equation}\label{eqn:designdef}
		\mathbb{E}_{x, y \sim \mu} 
		\left[
			p\left( | \langle x, y \rangle |^2 \right) 
		\right]
		=
		\int_{x,y} 
			p\left( | \langle x, y \rangle |^2 \right) 
			\mathrm{d} x
			\mathrm{d} y,
	\end{equation}
	where the right-hand-side integration is with respect to the uniform
	(Haar) measure on the sphere.
\end{definition}

In other words, sampling according to $\mu$ should give the same
expectation values as sampling according to the uniform measure for
any random variable that is a polynomial in $|\langle x, y
\rangle|^2$ of order at most $t$. From now on, we will only be concerned with the case where $\mu$
is the uniform measure on a finite set of unit vectors.

It is not hard to see that  $\mu$ fulfills (\ref{eqn:designdef}) for
all polynomials of order $t$ or less, if equality holds for the
specific case of $p(z)=z^{t}$. The resulting value is the $t$-th
order \emph{frame potential} 
\cite{benedetto_tight_2003}
\begin{equation}\label{eqn:potential}
	\mathcal{F}_t(\mu) := 
	\mathbb{E}_{x,y\sim \mu} 
		\left[
			 | \langle x, y \rangle |^{2t}
		\right].
\end{equation}
It is known that the Haar integral on the r.h.s.\ of
(\ref{eqn:designdef}) minimizes the frame potential over the set of
all measures $\mu$ and that, in fact, its value is given by
\begin{equation}
	\mathcal{F}_t (\mu) \geq \
	\mathcal{W}_t \left( D \right) :=
	\binom{D+t-1}{t}^{-1}.
	\label{eq:sidelnikov_inequality}
\end{equation}
This relation is known as \emph{Welch bound} \cite{welch_lower_1974} or
\emph{Sidelnikov inequality} \cite{sidelnikov_upper_1975}. In summary, we have:

\begin{theorem}[\cite{benedetto_tight_2003,scott_tight_2006,Levenshtein_designs_1998,koenig_cubature_1999}]\label{thm:designs}
	Fix a dimension $D$ and let $\mu$ be a probability measure on the
	unit sphere in $\mathbbm{C}^D$. The measure $\mu$ is a complex
	projective $t$-design if and only if its frame potential meets the
	Welch bound
	\begin{equation*}
		\mathcal{F}_t(\mu)= 
		\mathcal{W}_t(D).
	\end{equation*}
\end{theorem}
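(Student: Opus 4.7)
The plan is to recast the frame potential $\mathcal{F}_t(\mu)$ as a Hilbert--Schmidt squared norm on the symmetric subspace of $(\mathbbm{C}^D)^{\otimes t}$, derive the Welch bound from a Cauchy--Schwarz inequality for positive operators of fixed trace, and then check that the equality case coincides with the $t$-design property.

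First, I would introduce the \emph{moment operator}
\begin{equation*}
M_t(\mu) := \mathbb{E}_{x\sim\mu}\bigl[(|x\rangle\langle x|)^{\otimes t}\bigr]
\end{equation*}
acting on $(\mathbbm{C}^D)^{\otimes t}$. Using $|\langle x,y\rangle|^{2t} = \Tr\bigl[(|x\rangle\langle x|)^{\otimes t}(|y\rangle\langle y|)^{\otimes t}\bigr]$ and linearity of expectation, one obtains $\mathcal{F}_t(\mu) = \Tr[M_t(\mu)^2]$. Unitary invariance of the Haar measure combined with Schur's lemma (applied to the irreducible action of $U^{\otimes t}$ on the symmetric subspace) then identifies the Haar moment operator as $\mathcal{W}_t(D)\,P_\mathrm{sym}^{(t)}$, where $P_\mathrm{sym}^{(t)}$ is the orthogonal projector onto the totally symmetric subspace of $(\mathbbm{C}^D)^{\otimes t}$, of dimension $\binom{D+t-1}{t}$.

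For the Welch bound itself, observe that $M_t(\mu)$ is positive, has unit trace (each $|x\rangle$ is a unit vector), and is supported on the symmetric subspace (each $(|x\rangle\langle x|)^{\otimes t}$ is). Cauchy--Schwarz on its eigenvalues therefore gives
\begin{equation*}
\Tr[M_t(\mu)^2] \geq \frac{(\Tr M_t(\mu))^2}{\binom{D+t-1}{t}} = \mathcal{W}_t(D),
\end{equation*}
with equality if and only if $M_t(\mu) = \mathcal{W}_t(D)\,P_\mathrm{sym}^{(t)}$, i.e.\ $M_t(\mu)$ coincides exactly with the Haar moment operator.

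It remains to upgrade this operator identity to the full design property of \autoref{def:designs}. The ``$\Rightarrow$'' direction is immediate: specialising the design condition to $p(z)=z^{t}$ saturates the Welch bound. Conversely, suppose $\mathcal{F}_t(\mu)=\mathcal{W}_t(D)$, so that $M_t(\mu)=M_t(\mathrm{Haar})$. Any polynomial of degree $\leq t$ expands as $p(z)=\sum_{s=0}^{t} c_s z^{s}$, hence
\begin{equation*}
\mathbb{E}_{x,y\sim\mu}\!\left[p(|\langle x,y\rangle|^2)\right] = \sum_{s=0}^{t} c_s\,\Tr[M_s(\mu)^2].
\end{equation*}
Because $|x\rangle$ is a unit vector, $M_s(\mu)$ is obtained from $M_t(\mu)$ by a partial trace over the last $t-s$ tensor factors, so agreement with the Haar moment at the highest order propagates to every lower one; each summand on the right then reproduces its Haar value. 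This final polarization step is the only non-routine ingredient; the rest is routine bookkeeping on the symmetric subspace.
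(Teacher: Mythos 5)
Your proof is correct and is essentially the standard symmetric-subspace argument from the cited literature: the paper itself does not prove \autoref{thm:designs} but merely cites it, together with the remark that it suffices to check $p(z)=z^t$. Your identification of $\mathcal{F}_t(\mu)=\Tr[M_t(\mu)^2]$, the Schur-lemma computation of the Haar moment operator, the Cauchy--Schwarz equality case, and the partial-trace argument propagating $M_t(\mu)=M_t(\mathrm{Haar})$ down to all lower moments together supply exactly the details the paper leaves implicit.
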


\subsection{Main results}

At the heart of this work is an explicit characterization of the frame
potential assumed by the uniform distribution over stabilizer states
in prime power dimensions $D=d^n$. We denote the set of stabilizer
states on $\big(\mathbb{C}^d\big)^{\otimes n}\simeq \mathbb{C}^D$ by
$\operatorname{Stabs}(d,n)$. The unitary symmetry group of the set of
stabilizer states is the \emph{Clifford group} (for a precise
definition, see \autoref{sub:stabilizer_states}). All results are
then implied by the following recursion formula over the dimension's
exponent $n = \log_d (D)$.

\begin{theorem}[Main Theorem]\label{thm:technical}
	Let $d$ be a prime number and let  $t \in \mathbb{N}_+$. Then for
	all dimensions $D = d^n$, the frame potential  $\mathcal{F}_t
	(\operatorname{Stabs}(d,n))$ of stabilizer states in $\mathbb{C}^D$
	is determined by the following recursion formula over $n$:
	\begin{align}
		\mathcal{F}_t (\operatorname{Stabs}(d,1)) &= \frac{d^{2-t} + 1 }{(d+1)d}, \label{eq:base_case} \\
		\frac{\mathcal{F}_t \left(\operatorname{Stabs}(d,n+1) \right)}{ \mathcal{F}_t \left( \operatorname{Stabs}(d,n) \right)}
		 &= \frac{d^{n-(t-2)} + 1}{d \left( d^{n+1} +1 \right)}.
		 \label{eq:recursion_step}
	\end{align}
\end{theorem}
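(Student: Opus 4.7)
The plan is to exploit the fact that the Clifford group acts transitively on stabilizer states, so by Clifford invariance
\[
\mathcal{F}_t(\operatorname{Stabs}(d,n)) = \frac{1}{|\operatorname{Stabs}(d,n)|}\sum_{x}|\langle x, y_0\rangle|^{2t}
\]
for any fixed reference stabilizer state $|y_0\rangle$. Each stabilizer state will be parameterized by a pair $(L,\chi)$, where $L$ is a Lagrangian subspace of the discrete symplectic phase space $V=\mathbb{F}_d^{2n}$ and $\chi$ is the character selecting the joint eigenspace. A short trace computation with the stabilizer projectors $\Pi_S = d^{-n}\sum_{s\in S} s$ shows that $|\langle x, y_0\rangle|^2 = d^{\dim(L\cap L_0)-n}$ whenever $\chi$ and $\chi_0$ agree on $L\cap L_0$, and $0$ otherwise. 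Thus the frame potential depends only on the symplectic geometry of $L$ relative to the reference Lagrangian $L_0$.

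Next I would carry out the counting step. Writing $k = n - \dim(L\cap L_0)$ and $N_k(d,n)$ for the number of stabilizer states with $|\langle x, y_0\rangle|^2 = d^{-k}$, Lagrangians with the prescribed intersection dimension are enumerated by first choosing the intersection $U\subseteq L_0$ (giving the $d$-Gaussian binomial $\binom{n}{k}_d$), then picking a Lagrangian of the symplectic quotient $U^{\perp}/U\cong \mathbb{F}_d^{2k}$ transverse to the image of $L_0$ (such Lagrangians are graphs of symmetric matrices and number $d^{k(k+1)/2}$). The characters on $L$ that restrict to $\chi_0$ on $L\cap L_0$ number $d^k$, since the restriction map $L^{*}\to (L\cap L_0)^{*}$ has a $k$-dimensional kernel. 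Combining these ingredients yields the clean closed formula
\[
N_k(d,n) = \binom{n}{k}_d\, d^{k(k+3)/2},\qquad k=0,1,\ldots,n.
\]

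Substituting into the Clifford-averaged frame potential and using $|\operatorname{Stabs}(d,n)| = d^n\prod_{i=1}^n(d^i+1)$ gives
\[
\mathcal{F}_t(\operatorname{Stabs}(d,n)) = \frac{S_n(t,d)}{d^n\prod_{i=1}^n(d^i+1)},\qquad S_n(t,d):=\sum_{k=0}^n\binom{n}{k}_d d^{k(k+3-2t)/2}.
\]
The recursion (\ref{eq:recursion_step}) then falls out of the $q$-Pascal identity $\binom{n+1}{k}_d = \binom{n}{k}_d + d^{n-k+1}\binom{n}{k-1}_d$: splitting $S_{n+1}(t,d)$ accordingly and re-indexing the shifted summand pulls out an overall factor of $d^{n+2-t}$, producing the one-line identity $S_{n+1}(t,d) = (1 + d^{n+2-t})\,S_n(t,d)$. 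Dividing by the cardinality ratio $d(d^{n+1}+1)$ gives exactly the quotient claimed in the theorem. The base case (\ref{eq:base_case}) I would check by direct evaluation at $n=1$, where the $d(d+1)$ stabilizer states partition into $d+1$ mutually unbiased bases with pairwise overlaps $1$, $1/d$, or $0$, and the arithmetic gives $(d^{2-t}+1)/(d(d+1))$.

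The main obstacle will be the counting step: correctly translating $|\langle x, y_0\rangle|^2$ into an invariant of discrete symplectic geometry while simultaneously tracking character compatibility, especially in the $d=2$ case where the sign cocycle of the Pauli group introduces subtleties beyond pure symplectic linear algebra. Once the closed formula for $N_k(d,n)$ is established, the rest is a one-line manipulation of Gaussian binomial coefficients.
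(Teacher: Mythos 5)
Your proposal is correct and follows essentially the same route as the paper: Clifford invariance to fix a reference state, the symplectic overlap formula, counting Lagrangians with prescribed intersection via the reduction $U^\perp/U$ and transverse Lagrangians as symmetric matrices, and a $q$-Pascal manipulation for the recursion. The only differences are cosmetic (you index by the codimension $k=n-\dim(L\cap L_0)$ and use the other form of the Gaussian Pascal identity), and you correctly flag the even-$d$ phase issue that the paper resolves with its compatible-bases lemma.
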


Comparing this explicit characterization of the frame potential to the
Sidelnikov inequality \eqref{eq:sidelnikov_inequality} allows us to
draw the following conclusions:

\begin{corollary}\label{cor:stabs3d}
	Let $d^n$ be a prime-power dimension.
	Then the following statements are true
	\begin{enumerate}
		\item $\operatorname{Stabs}(d,n)$ forms a complex projective 2-design.
		\item $\operatorname{Stabs}(d,n)$ constitutes a complex projective
		3-design if and only if $d = 2$.
		\item 
		The set $\operatorname{Stabs}(d,n)$ does not constitute a complex
		projective 4-design.
		\item	 
		The Clifford group does not act irreducibly on
		$\operatorname{Sym}^4(\mathbb{C}^D) \subset \left( \mathbb{C}^D \right)^{\otimes 4}$. In particular, it is not a
		unitary $4$-design.
	\end{enumerate}
\end{corollary}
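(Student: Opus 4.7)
The plan is to solve the recursion of Theorem~\ref{thm:technical} in closed form by telescoping and then to invoke Theorem~\ref{thm:designs} together with the Sidelnikov--Welch bound \eqref{eq:sidelnikov_inequality} for each item. Under this strategy, items (1)--(3) become elementary algebraic checks, while item~(4) is a short representation-theoretic consequence of item~(3).

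For parts (1)--(3) I would unroll the recursion to obtain, for each $t\in\{2,3,4\}$, an explicit rational expression for $\mathcal{F}_t(\operatorname{Stabs}(d,n))$. Because the recursion ratio and the base case are both built from factors of the shape $d^a+1$, the resulting product $\prod_k (d^{a_k}+1)/(d^{b_k}+1)$ collapses telescopically. Comparing with $\mathcal{W}_t(d^n)=\binom{d^n+t-1}{t}^{-1}$ then reduces every claim to an elementary identity or inequality in the integers.

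Concretely, for $t=2$ the recursion ratio $(d^n+1)/(d(d^{n+1}+1))$ agrees term-by-term with the ratio $\mathcal{W}_2(d^{n+1})/\mathcal{W}_2(d^n)$ and the base case matches, so $\mathcal{F}_2=\mathcal{W}_2$ and item~(1) follows from Theorem~\ref{thm:designs}. For $t=3$ the telescope yields
\begin{equation*}
\mathcal{F}_3(\operatorname{Stabs}(d,n)) = \frac{2(d+1)}{d^{n+1}(d^{n-1}+1)(d^n+1)}
\end{equation*}
for every prime $d$ and every $n\ge 1$. A direct comparison with $\mathcal{W}_3(d^n)=6/(d^n(d^n+1)(d^n+2))$ shows that $\mathcal{F}_3-\mathcal{W}_3$ is a positive multiple of $(d-2)(d^n-1)$, which vanishes precisely when $d=2$, establishing item~(2). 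For item~(3) with $d$ odd, the failure of the 3-design property already rules out the 4-design property. For $d=2$, one further telescope step produces
\begin{equation*}
\mathcal{F}_4(\operatorname{Stabs}(2,n)) = \frac{30}{2^{n+3}(2^{n-2}+1)(2^{n-1}+1)(2^n+1)},
\end{equation*}
and after substituting $2^{n-2}+1=(2^n+4)/4$ and $2^{n-1}+1=(2^n+2)/2$ and cancelling common factors, the strict inequality $\mathcal{F}_4>\mathcal{W}_4(2^n)$ reduces to $2^n>1$, valid for all $n\ge 1$.

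Item~(4) is of a different flavour, and I would argue by representation theory. Since the Clifford group acts transitively on stabilizer states, the empirical fourth moment
\begin{equation*}
\frac{1}{|\operatorname{Stabs}|}\sum_{\psi\in\operatorname{Stabs}}\bigl(|\psi\rangle\langle\psi|\bigr)^{\otimes 4}
\end{equation*}
coincides with the Clifford twirl of $(|0^n\rangle\langle 0^n|)^{\otimes 4}$, an operator supported on $\operatorname{Sym}^4(\mathbb{C}^D)$. If the Clifford group acted irreducibly on this subspace, Schur's lemma would force the twirl to be a scalar multiple of the symmetric projector, which is precisely the Haar fourth moment. Theorem~\ref{thm:designs} would then identify $\operatorname{Stabs}(2,n)$ as a 4-design, contradicting item~(3). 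The \emph{in particular} assertion follows because any unitary 4-design renders each of its orbits a complex projective 4-design.

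The main obstacle is item~(3) for the qubit case $d=2$: the hierarchy shortcut available for odd $d$ is unavailable, and one has to push the closed-form comparison through in full. The calculation is bookkeeping rather than conceptual, but keeping the three telescoping factors and their exponents straight demands some patience.
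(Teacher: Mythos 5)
Your proposal is correct and follows essentially the same route as the paper: both arguments reduce each item to comparing the recursion of Theorem~\ref{thm:technical} against the Welch bound $\mathcal{W}_t(d^n)$ via Theorem~\ref{thm:designs}, and both derive item~(4) from item~(3) through the observation that irreducibility of the Clifford action on $\operatorname{Sym}^4(\mathbb{C}^D)$ would force every orbit to be a 4-design. The only differences are presentational — you telescope to closed forms (your formulas for $\mathcal{F}_3$ and $\mathcal{F}_4$ check out, as does the factorization $\mathcal{F}_3-\mathcal{W}_3\propto(d-2)(d^n-1)$) where the paper matches base cases and recursion ratios term by term, and you spell out the Schur's-lemma step that the paper delegates to the literature.
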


As indicated before, the first fact was already widely known
\cite{klappenecker_mutually_2005,bengtsson_geometry_2006,gross_evenly_2007}. The other
results, however, are new to the best of our knowledge. We reemphasize
that these assertions follow immediately form the Main~Theorem, which
may be of independent interest.

\subsection{Applications and Outlook}

Here, we sketch relations of the result to problems from signal
analysis and quantum physics. Elaborating on these connections will be
the focus of future work.

In \emph{low-rank recovery} \cite{candes_foundations_2009, gross_quantum_2010, gross_recovering_2011,
fora13}, a low-rank matrix $X$ is to be reconstructed from few linear
measurements of the form $y_i = \operatorname{tr} \left( X A_i \right)$.  In the
\emph{phase retrieval problem}
\cite{Wa63,CanStroVor13,gross_partial_2015} one aims to recover a
complex vector $x \in \mathbb{C}^D$ from the absolute value of a small
number of measurements $y_i = |\langle x, a_i \rangle|$ that are ignorant towards phase information. This task can be reduced to a particular instance of rank-one matrix recovery by
rewriting  the measurements as \cite{balan_painless_2009,candes_phase_2015}
\begin{equation*}
	y_i^2 = \operatorname{tr}\big( | x\rangle \langle x|\,|a_i\rangle
	\langle a_i|\big),
\end{equation*}
i.e.\ by setting $X=|x\rangle\langle x|$ and $A_i = |a_i\rangle\langle
a_i|$.  For both problems, strong recovery guarantees for randomly
constructed measurements are known.
Oftentimes these rely on generic (e.g.\ Gaussian) measurement ensembles
and employing complex projective designs to partially derandomize 
these result has been proposed in both contexts
\cite{gross_partial_2015,kueng_low_2015, ehler_phase_2015}.

Regarding both low rank matrix recovery and phase retrieval, it is known that
sampling measurement vectors independently from a 2-design does not do
the job \cite{gross_partial_2015}, while 4-designs
already have an essentially optimal performance \cite{kueng_low_2015, kabanava_stable_2015}.
However, the remaining intermediate case for $t=3$ is not yet fully
understood.
Numerical studies conducted by Drave and Rauhut
\cite{drave_bachelor_2015} indicate that random stabilizer-state
measurements perform surprisingly well at that task.
The combinatorial properties of prime power stabilizer states -- e.g.\
\autoref{thm:technical} -- may help to clarify this situation. We
believe this to be a potentially very insightful open problem.

Finally, we want to point out that one nice structural property of
stabilizer states is that they come in bases, i.e.\ the set of all
stabilizer states is a union of different orthonormal bases (see e.g.\ \autoref{th:stabilizer_states} below).
This allows for a considerably more structured random measurement
protocol: 
Select one such basis at random and iteratively measure the trace
inner product of an unknown low rank matrix with all projectors onto
the individual basis vectors. After having acquired $D$ data points
that way, choose a new stabilizer basis at random and repeat.  We
refer to \cite{kueng_low_2015b} for a detailed description of such a
protocol. It should be clear that it has immediate applications to
\emph{quantum state tomography}. 
In the above paper, non-trivial
recovery statements have been announced for $t$-designs that admit
such a basis structure and have strength $t \geq 3$. Again, stabilizer
states obey these criteria and have been used for the numerical
experiments conducted there. However the announced recovery
statement suffers from a non-optimal sampling rate for 3-designs and
the rich combinatorial structure of stabilizer bases 
might help to amend that situation.

\subsection{Relation to previous work and history}

After completion of this work (first announced at
the QIP 2013 conference \cite{kueng_qip_2013}), we became aware of the
fact that a close analogue of our main result follows from a statement
proved in the field of algebraic combinatorics
\cite{sidelnikov_spherical_1999} in 1999.
The object of study there is a \emph{real} version of stabilizer
states in $\mathbb{R}^{2^n}$, as well as their symmetries, which are  given
by a real version of the Clifford group. 
The key result is that under the action of the real 
Clifford group, the space
$\operatorname{Sym^3}(\mathbb{R}^{2^n})$ decomposes 
into irreps in exactly the same way as it does under the action of the
full orthogonal group $O(2^n)$
\cite{sidelnikov_spherical_1999,nebe_self_2006}. 
This implies 
\cite{goethals_spherical_1979,bannai_some_1979}
that \emph{any} orbit of the real
Clifford group gives rise to a set that reproduces moments of Haar
measure up to order $6$ (the established -- if confusing --
terminology is to refer to such sets as
\emph{spherical $6$-designs} \cite{delsarte_spherical_1977}, 
while the complex-valued analogue would be called a \emph{complex
projective $3$-design} \cite{renes_symmetric_2004}).

The findings of \cite{sidelnikov_spherical_1999} are formulated in the
language of algebraic invariant theory.  While the present authors
were trying to relate them to the results we had established in the
context of quantum information, we became aware of yet another
development. Huangjun Zhu \cite{zhu2015} independently derived a very
simple and elegant proof showing that the complex Clifford group in
dimensions $d=2^n$ actually forms a \emph{unitary 3-design}
\cite{dankert_exact_2009, gross_evenly_2007}.  This means that the the
irreducible representation spaces of the action of the Clifford group
on $\big(\mathbb{C}^{2^n}\big)^{\otimes 3}$ coincide with those of the
full unitary group $U(d)$.  In particular, the Clifford group acts
irreducibly on $\operatorname{Sym}^3(\mathbb{C}^d)$ which, in turn,
implies that that any orbit of the group constitutes a complex
projective 3-design.  The work of Zhu thus fully implies our main
result. What is more, the proof is simpler. 

The appeal of the question treated here was underscored even more,
when we learned a few days prior to submission of this paper to the
arxiv e-print server, that yet another researcher -- Zak Webb -- had
independently obtained results related to the ones of Zhu
\cite{webb2015}.

In comparision to these works,
our proof methods are completely different: We rely on counting
structures in discrete symplectic vector spaces in order to compute
the angle set between stabilizer
states, whereas \cite{sidelnikov_spherical_1999} is based on algebraic
invariant theory and \cite{zhu2015} on character theory. As a
corollary, we derive an expression for the number of stabilizer states
with prescribed inner product to a reference state.  This finding
might be of independent interest.  Also, we show that the set of
stabilizer states fails to be a 4-design in dimensions $2^n$ and that
stabilizer states in dimensions other than powers of two do not even
constitute a 3-design. The simultaneously submitted papers seem to
have left this possibility open.

\section{Proof of the main statement}

\subsection{Outline}

We already mentioned in the introduction that there is a geometric
approach to stabilizer states building on the theory of discrete
symplectic vector spaces\footnote{
This
is connected to the fact that stabilizer states are the natural discrete
analogue of \emph{Gaussian states} of bosonic systems, where the
symplectic structure is well-appreciated. For a concise introduction of this
point of view, see \cite{gross_hudsons_2006}.}.
This 
\emph{phase space formalism}
will be introduced in \autoref{sub:phase_space}.
We formally define 
stabilizer states 
and explain how to compute inner products 
in this language
in \autoref{sub:stabilizer_states}.
We then move on to briefly introducing Grassmannians and some core
concepts of discrete symplectic geometry. These tools will be used to
establish \autoref{thm:technical} in \autoref{sec:main_proof}.

\subsection{Phase Space Formalism} \label{sub:phase_space}

We start by considering a $d$-dimensional Hilbert space 
$\mathcal{H}$,
equipped with a basis $\{ | q \rangle \,|\, q\in Q\}$, where the
\emph{configuration space} $Q$ is given by 
$Q:=
	\{0, \dots, d-1\}\subset \mathbb{Z}$
with arithmetics modulo $d$.
Following 
\cite{debeaudrap_linearized_2011, appleby_symmetric_2005}, 
we define two phase factors 
$\tau := \mathrm{e}^{\pi i (d^2+1)/d}=(-1)^d \mathrm{e}^{\pi i /d}$ 
and
$\omega := \tau^2 = \mathrm{e}^{2\pi i /d}$.
For 
$q,p \in Q$, we introduce the \emph{shift} and \emph{boost} operators
defined by the relations
\begin{equation}\label{eqn:shiftboost}
\textrm{shift:} \; \hat{x}(q) |x \rangle = |x+q \rangle , \quad
\textrm{boost:} \;\hat{z}(p)|x\rangle = \omega ^{px} |x\rangle
\end{equation}
for all $x\in Q$.

For $p,q \in Q$, the corresponding \emph{Weyl operator} (or \emph{generalized Pauli operator}) is defined as
\begin{equation}\label{eqn:weyl}
w(p,q) = \tau^{-pq} \hat{z}(p) \hat{x} (q).
\end{equation}
Again following 
\cite{debeaudrap_linearized_2011, appleby_symmetric_2005}, we adopt
the convention that any artihmetic expression
\emph{in the exponent of $\tau$} is \emph{not} understood to be modulo
$d$, but rather as taking place in the integers. This makes a
difference for even dimensions (see below). One could argue that it would be
slightly cleaner to syntactically distinguish the modular operations
appearing in (\ref{eqn:shiftboost}) from the non-modular arithmetic in
(\ref{eqn:weyl}). However, the implicit convention does declutter
notation and we feel it is ultimately benefitial.

This definition is consistent with established conventions. For
example, one
recovers the usual Pauli matrices for
the qubit case $d=2$.
We use the notation $V := Q \times Q$
and consequently write $w(v) := w(v_p,v_q)$ for elements $v=(v_p,v_q) \in V$.
Furthermore we define the \emph{standard symplectic form} 
\begin{equation}
	\left[ u,v \right] := 
	u_p v_q - u_q v_p
	=
	u^T J v 
	\label{eq:symplectic}
\end{equation}	
where
\begin{equation*}
J = \left(
\begin{array}{cc} 0 & 1 \\
			-1 & 0 \end{array}
\right)
\end{equation*}
and $u=(u_p,u_q), v=(v_p, v_q) \in V$.
If $d$ is prime, 
the space $V$ together with the non-degenerate symplectic product
(\ref{eq:symplectic}) forms a symplectic vector space which is called
\emph{phase space} due to its resemblance to the phase space appearing in
classical mechanics.

The Weyl operators 
obey the \emph{composition} and \emph{commutation relations}
\begin{eqnarray}
w(u)w(v) &=& \tau^{[u,v]} w(u+v),	\label{eq:composition_relations}\\
w(u) w(v)  &=& \omega ^ {\left[u,v \right]} w(v)w(u)\quad \forall u,v \in V.	\label{eq:commutation_relations}
\end{eqnarray}
which can be verified by direct computation. 

It is worthwhile to point out that for odd $d$, the ring
$\mathbb{Z}_d$ contains a multiplicative inverse of $2$, namely
$2^{-1} = \frac{1}{2}(d+1)\in\mathbb{Z}_d$.  
This in particular assures that $\tau$ is a $d$th root of unity and
hence
the phase factors in (\ref{eqn:weyl},
\ref{eq:composition_relations}) depend only on, respectively, $pq$ and $[u,v]$
\emph{modulo $d$}. In even dimensions, however, $\tau$
has order $2d$. This somewhat complicates the theory of stabilizer
states in the even-$d$ case -- c.f.~\autoref{sub:stabilizer_states}.

The preceeding definitions have been made with a single
$d$-dimensional system in mind. We now extend our formalism to $n$
such systems.  The corresponding configuration space is $Q =
\mathbb{Z}_d^n$ with elements $q = \left(q_1 , \ldots, q_n \right)$
and $q_i \in \mathbb{Z}_d$. 
The associated phase space will be denoted by $V := Q \times Q \simeq
\mathbb{Z}_d^{2n}$ ($\dim V = 2n$).
It carries a 
symplectic form given by the natural
multi-dimensional analogue of (\ref{eq:symplectic}):
\begin{equation*}
	\left[ u,v \right] := u^T J v, 
	\qquad
	J = \left(
	\begin{array}{cc} 0_{n \times n} & \mathbb{I}_{n \times n} \\
				-\mathbb{I}_{n \times n} & 0_{n \times n} \end{array}
	\right).
\end{equation*}
With elements $(p,q)\in V$, we associate Weyl operators 
\begin{align*}
w(p,q) =& w(p_1,\ldots ,p_n,q_1,\ldots q_n) \\
=& w(p_1,q_1) \otimes \ldots
\otimes w_(p_n,q_n)
\end{align*}
acting on the tensor product space $\big(\mathbb{C}^d\big)^{\otimes n}$.
With
these definitions, the composition and commutation
relations (\ref{eq:composition_relations},
\ref{eq:commutation_relations}) remain valid for $n>1$.

We conclude this section with two formulas that will be important in
what follows and can both be verified immediately. First, the Weyl
operators are trace-less, with the exception of the trivial one:
\begin{eqnarray}
	\mathrm{tr} \left( w(v) \right) &=& d^n \delta_{v,0}.	\label{eq:weyl_trace}
\end{eqnarray}
Second, for any vector $v \in V$ and any
subspace $W \subseteq V$ one has
\begin{eqnarray}
\sum_{w \in W} \omega^{[v,w]} &=& \label{eq:phase_summation}
	\begin{cases}
	|W| & \textrm{if }	[v,w]=0 \; \forall w \in W,		\\
	0 	&	\textrm{else}.
	\end{cases}
\end{eqnarray}

\subsection{Stabilizer States} \label{sub:stabilizer_states}

Here, we will cast the established theory \cite{gottesman_stabilizer_1997, nielsen_quantum_2010} of
stabilizer states into the language of symplectic geometry required
for our proof. For previous similar expositions, see
\cite{gross_hudsons_2006, gross_stabilizer_2013}.

Note that Equation~(\ref{eq:commutation_relations}) implies that two
Weyl operators $w(u)$ and $w(v)$ commute if and only if $\left[ u, v
\right]=0$. 
Now consider the image of an entire subspace $M \subseteq V$
under the Weyl representation. We define 
\begin{equation*}
	w(M) = \left\{ w(m) : \; m \in M \right\}
\end{equation*}
and observe that $w(M)$ consists of mutually commuting operators if
and only if $\left[ m, m' \right] = 0$ holds for all $m , m' \in M$.
Spaces having this property are called \emph{isotropic}. Assume now
that $M$ is isotropic.

If $d$ is odd, then the $w(M)$ not only commute, but actually form a
group $w(u)w(v)=w(u+v)$. That's because in
(\ref{eq:composition_relations}), the phase
factor depends on $[u,v]$ modulo $d$, which is zero by assumption for
$u,v\in M$. For even dimensions, however, $[u,v]$ might equal $d$ and
in this case,
the product $w(u)w(v) = - w(u+v)$ does not lie in
$w(M)$ (in other words, $v \mapsto w(v)$ is only a
\emph{projective} representation of the additive group of $M$).
This would create problems in our analysis below. Fortunately, it
turns out that one can choose phases $c(v)\in\{\pm 1\}$ such that
$v\mapsto c(v) w(v)$ does become a true representation of $M$. We will
now describe this construction.

To this end, choose a basis $\mathcal{B} = \left\{ u_1,\ldots, u_{\dim
M}
\right\}$ of $M$. For a given element $m\in M$, let $m = \sum_i m_i
u_i$ be the expansion of $m$ with respect to this basis.  Define the
(basis-dependent) Weyl operators to be:
\begin{equation}
	w_{\mathcal{B}} (m) := 
	\prod_{i=1} w(u_i)^{m_i}.	
	\label{eq:basis_representation}
\end{equation}
Using the fact that the $w(u_i)$ commute, one then obtains for
$m,m'\in M$
\begin{align*}
	w_{\mathcal{B}}(m) w_{\mathcal{B}}(m') =& \prod_{i=1}^n w(u_i)^{m_i} \prod_{i=1}^n w(u_i)^{m'_i} \\
	=& \prod_{i=1}^n w(u_i)^{m_i + m'_i}
	= w_{\mathcal{B}}(m+m').
\end{align*}
This is the desired representation of $M$.

Stabilizer states turn out to be related to \emph{maximal} isotropic
spaces $M$.
We call a subspace $M \subseteq V$ \emph{Lagrangian} (LAG) -- or
\emph{maximally isotropic} -- if every vector $v \in V$ that commutes
with all elements of $M$ is already contained in $M$. This is
precisely the case if
\begin{equation*}
M = 
 \left\{ v \in V : \; [v,m] = 0 \; \forall m \in M \right\} =: M^\perp,
\end{equation*}
where $M^\perp$ denotes the symplectic complement of $M$.
A basic result of symplectic geometry (e.g.\ Satz 9.11 in
\cite{huppert}) states that this condition is fulfilled if and only if
$\dim M = \frac{1}{2} \dim V = n$, or equivalently $|M| = d^n$.  

We are now ready to state the relation between Lagrangian subspaces
and state vectors in Hilbert space:

\begin{theorem}[Stabilizer States]\label{th:stabilizer_states}
Let $M \subset V$ be a Lagrangian subspace, let $\mathcal{B}$ be a
basis of $M$. Then the following assertions are valid:
\begin{enumerate}
\item Up to a global phase, every $v \in M$ singles out one unit vector $|M,v \rangle \in \mathcal{H}$
-- called a \emph{stabilizer state} that fulfills the eigenvalue equations
\begin{equation}
	\omega^{ [v,m] } w_\mathcal{B}(m) |M,v\rangle = |M,v \rangle
	\quad \forall m \in M.	\label{eq:stabilizer_definition}
\end{equation}
\item Two elements $u,v\in M$ define the same stabilizer state
if and only if they belong to the same affine space
$\left[ v \right]_M := \left\{ v + m, \; m \in M \right\}$ modulo $M$.
If this is not the case, the resulting stabilizer states are orthogonal,
i.e.\ $\langle M, u | M, v \rangle = 0$.
\item 
$V$ can be decomposed into a union of $d^n = \mathrm{dim}(\mathcal{H})$ different affine spaces modulo $M$. Via \eqref{eq:stabilizer_definition}, this union defines an orthonormal basis of stabilizer states associated with $M$.
\end{enumerate}
\end{theorem}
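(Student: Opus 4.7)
The plan is to construct, for each $v \in V$, an explicit rank-one orthogonal projector built out of the image $w_\mathcal{B}(M)$, and then to read off the asserted stabilizer state, its orthogonality properties, and the basis decomposition via standard manipulations with the phase-summation identity \eqref{eq:phase_summation}.

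Concretely, I would define
\begin{equation*}
    P_{M,v} := \frac{1}{|M|} \sum_{m \in M} \omega^{[v,m]} w_\mathcal{B}(m).
\end{equation*}
The first task is to verify that $P_{M,v}$ is an orthogonal projector. Hermiticity follows from unitarity of the Weyl operators together with $w_\mathcal{B}(-m) = w_\mathcal{B}(m)^{-1}$, which is a direct consequence of $w_\mathcal{B}$ being a true (not merely projective) representation of $M$, as set up in \eqref{eq:basis_representation}. Idempotency $P_{M,v}^2 = P_{M,v}$ is verified by expanding the square, substituting $m'' = m+m'$, and using that the phase $\omega^{[v,\cdot]}$ is additive in its argument. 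Rank-one-ness follows from \eqref{eq:weyl_trace}, which kills all but the $m=0$ contribution in
\begin{equation*}
    \mathrm{tr}(P_{M,v}) = \frac{1}{|M|} \sum_{m \in M} \omega^{[v,m]} \mathrm{tr}(w_\mathcal{B}(m)) = \frac{d^n}{|M|} = 1,
\end{equation*}
since $|M| = d^n$ by the Lagrangian condition. Hence $P_{M,v} = |M,v\rangle\langle M,v|$ for a unit vector unique up to global phase, and the eigenvalue equation \eqref{eq:stabilizer_definition} follows from $\omega^{[v,m']} w_\mathcal{B}(m') P_{M,v} = P_{M,v}$ after an index shift identical to the one used for idempotency.

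For the second assertion, I would compute
\begin{equation*}
    P_{M,u} P_{M,v} = \frac{1}{|M|^2} \sum_{k \in M} \omega^{[v,k]} w_\mathcal{B}(k) \sum_{m \in M} \omega^{[u-v,m]},
\end{equation*}
obtained by the substitution $k = m+m'$ inside the double sum and the true-representation property of $w_\mathcal{B}$. The inner sum is of exactly the form treated by \eqref{eq:phase_summation}, and evaluates to $|M|$ when $u-v \in M^\perp$ and vanishes otherwise. The Lagrangian property $M = M^\perp$ then cleanly dichotomizes the two cases: if $u-v \in M$, we obtain $P_{M,u}P_{M,v} = P_{M,v}$, so the two projectors coincide and the stabilizer states agree up to phase; otherwise the product vanishes, forcing $\langle M,u | M,v \rangle = 0$. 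The third assertion is then an immediate counting argument, since $|V|/|M| = d^{2n}/d^n = d^n = \dim \mathcal{H}$ gives exactly the right number of mutually orthogonal unit vectors to span $\mathcal{H}$.

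The main subtlety will be keeping careful track of the distinction between the projective representation $v \mapsto w(v)$ on $V$ and the honest representation $m \mapsto w_\mathcal{B}(m)$ on $M$. All algebraic manipulations above rely crucially on the composition law $w_\mathcal{B}(m)w_\mathcal{B}(m') = w_\mathcal{B}(m+m')$ holding \emph{without} additional phases, which is precisely what \eqref{eq:basis_representation} was designed to guarantee. For odd $d$ one could equivalently work with the $w(m)$ themselves, but in even dimensions the sign cocycles discussed in \autoref{sub:phase_space} would spoil both idempotency and Hermiticity, so the cleanest route is to phrase the entire argument in terms of $w_\mathcal{B}$ throughout.
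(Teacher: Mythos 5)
Your proposal is correct and follows essentially the same route as the paper: construct the projector $d^{-n}\sum_{m\in M}\omega^{[v,m]}w_\mathcal{B}(m)$, verify idempotency and unit trace via the true-representation property and \eqref{eq:weyl_trace}, use the phase summation \eqref{eq:phase_summation} together with $M=M^\perp$ for the orthogonality dichotomy, and count cosets $|V/M|=d^n$ for the basis claim. The only (harmless) deviations are that you verify Hermiticity explicitly, which the paper leaves implicit, and that you evaluate the operator product $P_{M,u}P_{M,v}$ rather than its trace.
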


This statement implies that each stabilizer state is uniquely
characterized by a Lagrangian subspace $M \subset V$ and one
particular affine space $[v]_M$ modulo $M$.  In the remainder of this
article it will be convenient to represent each such affine space by a
representative $\zeta \in [v]_M \in V$ contained in it.  We have opted
to denote such representatives of cosets $\zeta, \iota \in V$ by greek
letters to notationally underline their origin.

\begin{proof}[Proof of \autoref{th:stabilizer_states}]

Define
\begin{equation*}
\rho_{M,v} := d^{-n} \sum_{m \in M} \omega ^{ [v,m]}
w_{\mathcal{B}}(m)
\end{equation*}
and compute
\begin{align*}
\rho_{M,v }^2 =& d^{-2n} \sum_{m,m' \in M}	\omega ^{ [v,m]} \omega ^{ [v,m']}w_{\mathcal{B}}(m)w_{\mathcal{B}}(m')	\\
=& d^{-2n} \sum_{m,m' \in M}	\omega ^{ [v,m+m']} w_{\mathcal{B}}(m+m')		\\
=& d^{-n} \sum_{m \in M} \omega^{[v,m]} w_{\mathcal{B}}(m) = \rho_{M, v },
\end{align*}
as well as
\begin{align*}
	\operatorname{tr}\rho_{M,v} 
	=& d^{-n}
	\sum_{m,\in M}	\omega ^{[v,m]} \operatorname{tr}w_{\mathcal{B}}(m) \\
	=& d^{-n} \operatorname{tr} w_{\mathcal{B}}(0) = 1
\end{align*}
where we have employed (\ref{eq:weyl_trace}).
The first relation implies that
$\rho_{M,v}$ is a projection and the second one that is has rank one.
One can check by direct calculation that 
\begin{equation*}
	\omega^{ [v,m] } w_\mathcal{B}(m) 
	\rho_{M,v} 
	=
	\rho_{M,v},
\end{equation*}
holds for every $m \in M$.
Consequently, the 
so that the any vector from the range of 
$\rho_{M,v}$ fulfills all eigenvalue equations. However, since $\rho_{M,v}$ has rank one, its range 
corresponds to a single vector that we can associate with $|M,v \rangle \in \mathcal{H}$
up to a global phase.
This proves the first claim up to uniqueness which we are going to establish later on.

For the second claim, fix $u,v \in V$ and observe
\begin{align*}
\mathrm{tr} \left( \rho_{M, u} \rho_{M, v} \right)
=& d^{-2n} \sum_{m,m' \in M} \omega^{ [u,m] } \omega ^{ [v,m'] }\mathrm{tr} \left( w_{\mathcal{B}}(m+m') \right)	\\
=& d^{-2n} \sum_{m,m' \in M} \omega ^{ [u,m] } \omega ^{ [v,m'] } d^n \delta_{m+m',0}	\\
=& d^{-n} \sum_{m \in M} \omega ^{ [u - v, m] } 	\\
=& d^{-n} \begin{cases}
		|M| & \textrm{if } [u - v,m] = 0 \; \forall m \in M,	\\
		0 & \textrm{else},
		\end{cases}		
\end{align*}
where we have used \eqref{eq:phase_summation}.
But because $M$ is \emph{maximally} isotropic, $[u - v,m] = 0
\; \forall m \in M$ implies $u-v\in M$. Thus, there is one
$\rho_{M,u}$ for each affine space $u + M\subset V$, and two
distinct affine spaces give rise to othogonal states which is just the second claim. 

Finally, note that there are
$|V/M|=d^n=\dim\mathcal{H}$ such affine spaces, which proves that one
obtains an ortho-normal basis in this way. Moreover, this establishes the uniqueness
part of the first statement and implies, 
justifying that $|M,v \rangle$ is well-defined up to a global phase. 
\end{proof}

In the remainder of this section, we will show how to choose
consistent bases for two, possibly intersecting, Lagrangian spaces $M,
N$ and use these results to come up with formulas for the inner
product between two arbitrary stabilizer states.

\begin{lemma}[Compatible bases]	\label{lem:compatible_bases}
Let $M,N \subset V$ be two Lagrangian subspaces.
Then there exists 
bases $\mathcal{B}_M$ of $M$ and $\mathcal{B}_N$ of $N$
such that $w_{\mathcal{B}_K}(m) = w_{\mathcal{B}_M}(m) =
w_{\mathcal{B}_N}(m)$ for any $m \in M \cap N$.  
What is more, for
$m \in M$ and $n \in N$, it holds that
\begin{equation}
\mathrm{tr} \left( w_{\mathcal{B}_M} (m) w_{\mathcal{B}_N} (-n)\right) = d^n \delta_{m,n}.	\label{eq:compatible_bases}
\end{equation}
\end{lemma}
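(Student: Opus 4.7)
The plan is to build $\mathcal{B}_M$ and $\mathcal{B}_N$ by first fixing a common basis of the intersection $K := M \cap N$ and extending it independently into each of $M$ and $N$. Concretely, I would pick a basis $\{k_1,\ldots,k_r\}$ of $K$ (with $r = \dim K$) -- whose existence follows from the fact that $\mathbb{Z}_d$ is a field for prime $d$ -- and enlarge it to bases $\mathcal{B}_M = \{k_1,\ldots,k_r,u_{r+1},\ldots,u_n\}$ of $M$ and $\mathcal{B}_N = \{k_1,\ldots,k_r,v_{r+1},\ldots,v_n\}$ of $N$. With this choice, the first assertion is immediate: for $m = \sum_{i=1}^r m_i k_i \in M \cap N$, both $w_{\mathcal{B}_M}(m)$ and $w_{\mathcal{B}_N}(m)$ collapse to the same product $\prod_i w(k_i)^{m_i}$, which deserves the common label $w_{\mathcal{B}_K}(m)$.

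For the trace identity, the key observation is that iterating the composition rule (\ref{eq:composition_relations}) reduces every basis-dependent Weyl operator to a scalar multiple of an ordinary one, $w_{\mathcal{B}}(v) = c_{\mathcal{B}}(v)\,w(v)$, and that applying (\ref{eq:composition_relations}) once more yields
\begin{equation*}
w_{\mathcal{B}_M}(m)\, w_{\mathcal{B}_N}(-n) \;=\; c(m,n)\cdot w(m-n)
\end{equation*}
for some scalar $c(m,n)$. Combined with the trace formula (\ref{eq:weyl_trace}), this already forces the trace to vanish unless $m - n = 0$. Since $m \in M$ and $n \in N$, the equality $m = n$ compels $m = n \in K$.

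It remains to pin down the prefactor on the diagonal, and this is precisely where the shared basis of $K$ does its work: for $m = n \in K$, both representations restrict to the same $w_{\mathcal{B}_K}$, so
\begin{equation*}
w_{\mathcal{B}_M}(m)\, w_{\mathcal{B}_N}(-m) \;=\; w_{\mathcal{B}_K}(m)\, w_{\mathcal{B}_K}(-m) \;=\; w_{\mathcal{B}_K}(0) \;=\; \mathbb{I},
\end{equation*}
whose trace is $d^n$. The main obstacle I anticipate is the phase-bookkeeping in even dimensions, where $\tau$ has order $2d$ and the raw Weyl operators furnish only a \emph{projective} representation of an isotropic subspace. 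The whole rationale for introducing basis-dependent operators $w_{\mathcal{B}}$ was to obtain an honest linear representation of each Lagrangian, and the reason the construction above succeeds is that sharing a single basis on $K$ is exactly what prevents a spurious $\pm 1$ from creeping in between $w_{\mathcal{B}_M}(m)$ and $w_{\mathcal{B}_N}(m)$ when $m \in K$.
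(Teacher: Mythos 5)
Your proposal is correct and follows essentially the same route as the paper: choose a basis of $K = M\cap N$, extend it to bases of $M$ and $N$, use the composition relation \eqref{eq:composition_relations} to write $w_{\mathcal{B}_M}(m)w_{\mathcal{B}_N}(-n)$ as a scalar times $w(m-n)$ so that \eqref{eq:weyl_trace} kills all off-diagonal traces, and then observe that $m=n$ forces $m\in K$, where the shared basis makes the product exactly $w_{\mathcal{B}_K}(0)=w(0)$. Your closing remark about the shared basis on $K$ being precisely what eliminates the spurious sign in even dimensions is exactly the point of the lemma.
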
 

\begin{proof}
	Choose a basis $\{u_1, \dots, u_{\dim{M\cap N}}\}$ of $M\cap N$.  By
	elementary linear algebra, it can be extended both to a basis
	$\mathcal{B}_M$ of $M$ and to a basis $\mathcal{B}_N$ of $N$. The
	first claim follows immediately from
	(\ref{eq:basis_representation}). For the second claim, note that for
	from (\ref{eq:composition_relations}), we have that
	$w_{\mathcal{B}_M} (m) w_{\mathcal{B}_N} (-n) = \pm w(m-n)$. Thus,
	by (\ref{eq:weyl_trace}), the trace in (\ref{eq:compatible_bases})
	vanishes unless $m=-n$. In that case, however, $m,n\in K$ 
	and thus, by construction of the bases,
	$w_{\mathcal{B}_M}(m)=w_{\mathcal{B}_K}(m)$ and 
	$w_{\mathcal{B}_N}(-n)=w_{\mathcal{B}_K}(-n)$. Thus 
	\begin{equation*}
		w_{\mathcal{B}_M} (m) w_{\mathcal{B}_N} (-n) =
		 w_{\mathcal{B}_K}(m-n) = w_{\mathcal{B}_K}(0) = w(0).
	\end{equation*}
	The claim then follows from (\ref{eq:weyl_trace}).
\end{proof}

We conclude this subsection with an important observation: The overlap
of different stabilizer states is fully characterized by the geometric
intersection of their underlying Lagrangian subspaces.

\begin{lemma}[Overlap of stabilizer states]\label{prop:stabilizer_overlap}
Let $|M, \zeta \rangle,|N,\iota \rangle \in \mathcal{H}$ be two stabilizer states characterized by Lagrangian subspaces $M,N\subset V$ 
(as well as corresponding bases $\mathcal{B}_M$ and $\mathcal{B}_N$ if $d$ is even)
and representatives $\zeta , \iota \in V$ of cosets $[\zeta]_M \in
V/M$ and $[\iota ]_N \in V/N$, respectively. Then, setting $K = M \cap
N$, their inner product is given by 
\begin{equation}
\left| \langle M,\zeta | N,\iota \rangle \right|^2   = \begin{cases}
d^{-n} |K | & \textrm{if } [\zeta, m] = [\iota , m] \; \forall m \in K ,					\\
0 & \textrm{else}.
\end{cases}
\end{equation}
\end{lemma}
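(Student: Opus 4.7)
The plan is to recognize that $|\langle M,\zeta|N,\iota\rangle|^2 = \mathrm{tr}(\rho_{M,\zeta}\,\rho_{N,\iota})$, where $\rho_{M,\zeta}$ and $\rho_{N,\iota}$ are the rank-one projectors constructed in the proof of \autoref{th:stabilizer_states}. Substituting in the explicit sums
$$\rho_{M,\zeta} = d^{-n}\sum_{m\in M} \omega^{[\zeta,m]}\, w_{\mathcal{B}_M}(m),\qquad \rho_{N,\iota} = d^{-n}\sum_{\nu\in N}\omega^{[\iota,\nu]}\, w_{\mathcal{B}_N}(\nu),$$
and using linearity of the trace gives a double sum whose summand is a phase times $\mathrm{tr}\bigl(w_{\mathcal{B}_M}(m)\,w_{\mathcal{B}_N}(\nu)\bigr)$.

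Next I would apply \autoref{lem:compatible_bases}: by choosing compatible bases $\mathcal{B}_M$ and $\mathcal{B}_N$ that share a basis for $K=M\cap N$, the trace equals $d^n$ when $\nu=-m$ and vanishes otherwise. This identity forces $m\in M$ and $-m\in N$, i.e.\ $m\in K$, and collapses the double sum to a single sum over $K$:
$$\mathrm{tr}(\rho_{M,\zeta}\,\rho_{N,\iota}) = d^{-n}\sum_{m\in K}\omega^{[\zeta,m]-[\iota,m]} = d^{-n}\sum_{m\in K}\omega^{[\zeta-\iota,m]}.$$

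The final step is to invoke the phase-summation identity (\ref{eq:phase_summation}) with $W=K$ and $v=\zeta-\iota$. The sum equals $|K|$ precisely when $[\zeta-\iota,m]=0$ for all $m\in K$ (equivalently $[\zeta,m]=[\iota,m]$ for every $m\in K$) and vanishes otherwise, yielding the claimed case distinction with prefactor $d^{-n}|K|$.

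The only genuinely subtle point is bookkeeping of signs in even dimensions, where $w_{\mathcal{B}_M}(m)\,w_{\mathcal{B}_N}(-m)$ could a priori differ from $w(0)$ by a phase coming from (\ref{eq:composition_relations}). \autoref{lem:compatible_bases} is precisely designed to rule this out: because $m$ and $-m$ both lie in $K$ and the bases agree on $K$, the product equals $w_{\mathcal{B}_K}(m-m)=w(0)$ exactly, without any stray sign. Once this is observed, the remainder of the argument is a routine manipulation of Weyl-operator identities.
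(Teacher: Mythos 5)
Your proposal is correct and follows essentially the same route as the paper: expand both rank-one projectors as Weyl-operator sums, use the compatible-bases trace identity of \autoref{lem:compatible_bases} to collapse the double sum to $K=M\cap N$, and finish with the phase-summation formula (\ref{eq:phase_summation}). Your remark on the even-$d$ sign bookkeeping is exactly the point the compatible-bases construction is designed to handle, so nothing is missing.
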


\begin{proof}
The claim follows from direct computation. According to \autoref{lem:compatible_bases} we can pick bases $\mathcal{B}_K$ of $K:=M \cap N$, $\mathcal{B}_M$ of $M$ and $\mathcal{B}_N$ of $N$  that are compatible with each other. With respect to these bases we can write
\begin{eqnarray*}
|M,\zeta \rangle \langle M, \zeta | &=& d^{-n} \sum_{m \in M} \omega ^{[\zeta, m]} w_{\mathcal{B}_M}(m),\\
|N, \iota \rangle \langle N, \iota | &=& d^{-n} \sum _{m' \in N} \omega ^{-[\iota ,m']}w_{\mathcal{B}_N}(-m').
\end{eqnarray*}
Formula (\ref{eq:compatible_bases} ) now implies
\begin{align*}
& \left| \langle M,\zeta | N,\iota \rangle \right|^2  \\
=& \mathrm{tr} \left(  |M,\zeta \rangle \langle M,\zeta | |N,\iota \rangle \langle N,\iota  | \right)		\\
=& d^{-2n}   \sum_{m\in M} \sum_{m' \in N} \omega ^{[\zeta,m]-[\iota, m'] } \mathrm{tr} \left(  w_{\mathcal{B}_M}(m) w_{\mathcal{B}_N}(-m') \right)		\\
=& d^{-n} \sum_{m \in M \cap N} \omega  ^{[\zeta-\iota, m]}							\\
=& d^{-n}
\begin{cases}
|M \cap N | & \textrm{if } [\zeta-\iota, m] = 0 \; \forall m \in M \cap N 					\\
0 & \textrm{else},
\end{cases}
\end{align*}
where the last equation follows from formula (\ref{eq:phase_summation}).
\end{proof}

\subsection{Grassmannian subspaces and discrete symplectic geometry} 	

Let $Q$ be a $n$-dimensional vector space over the finite field
$\mathbbm{Z}_d$. 
The \emph{Grassmannian} $\mathcal{G}(d,n,k)$ is the
set of $k$-dimensional subspaces of $V$. A standard result
-- e.g\
formula (9.2.2) in \cite{cameron_combinatorics_1994} -- says that the size of $\mathcal{G}$ is given by the \emph{Gaussian binomial coefficient}: 
\begin{equation}
	|\mathcal{G}(d,n,k)| = \binom nk_d
	 :=		\label{eq:gaussian_binomial}
	\begin{cases}
	\prod_{i=0}^{k-1} \frac{d^{n-i}-1}{d^{k-i}-1} & \textrm{if } k \leq n,	\\
	0 & \textrm{else}.
	\end{cases}
\end{equation}
This is the analogue of the familiar binomial coefficient for the finite field $\mathbb{Z}_d$. As such it exhibits similar properties, such as $\binom{n}{k}_d = \binom{n}{n-k}_d$ (symmetry), $\binom{n}{n}_d = \binom{n}{0}_d = 1$ (trivial coefficients) and Pascal's identity
\begin{equation}
  \binom{n}{k}_d = d^k \binom{n-1}{k}_d + \binom{n-1}{k-1}_d.
	\label{eq:gaussian_pascal1}	\\
\end{equation}
For further reading and proofs of these identities we refer to Chapter 9 in \cite{cameron_combinatorics_1994}
and move on to introducing some core concepts of symplectic geometry:

Let $V$ be a $2n$-dimensional symplectic vector space over the finite field $\mathbb{Z}_d$. 
A \emph{polarization} $(M,N)$ of $V$ is the choice of two Lagrangian 
subspaces $M,N$ which are transverse in the sense that their direct sum spans the entire space, i.e\ $ M \oplus N = V$.
For a fixed Lagrangian $M$ we define the set
\begin{equation*}
	\mathcal{T}(M) = \{ N \,|\, N \text{ Lagrangian; } (M, N) \text{ is
	a polarization of } V\}
\end{equation*}
of all Lagrangian subspaces transverse to $M$. 
The set $\mathcal{T}(M)$
appears in various contexts. For instance it
labels all \emph{graph states} (in a sense explaind below) in quantum information theory \cite{hein_entanglement_2006}

For the purpose of our counting argument, we need to compute the size of
$\mathcal{T}(M) \in V$.

\begin{proposition}\label{prop:polarizations}
	Let $V$ be a $2n$-dimensional symplectic space over $\mathbbm{Z}_d$
	and let $M$ be an arbitrary Lagrangian subspace. 
	Then, the cardinality of $\mathcal{T}(M)$ amounts to
	\begin{equation*}
		\mathcal{T}(d,n)
		:= 
		\big|\mathcal{T}(M)\big| 
		= d^{\frac12 n(n+1)}.
	\end{equation*}
\end{proposition}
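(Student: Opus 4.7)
My plan is to parametrize the transverse Lagrangians $N \in \mathcal{T}(M)$ as graphs of linear maps from a fixed reference Lagrangian into $M$, and to show that the Lagrangian condition translates into a symmetry constraint on the associated matrix. Counting symmetric matrices over $\mathbb{Z}_d$ then immediately yields the claimed cardinality.

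First I would pick some auxiliary Lagrangian $N_0 \subset V$ with $M \oplus N_0 = V$; existence of such an $N_0$ follows from the standard normal form for symplectic vector spaces. Choose a symplectic basis adapted to the polarization $(M,N_0)$: a basis $\{e_1,\ldots,e_n\}$ of $M$ and $\{f_1,\ldots,f_n\}$ of $N_0$ with $[e_i,f_j]=\delta_{ij}$ and $[e_i,e_j]=[f_i,f_j]=0$. Now let $N \in \mathcal{T}(M)$. Since $N \cap M = 0$ and $\dim N = \dim N_0 = n$, the projection $\pi_{N_0}: N \to N_0$ along $M$ is an isomorphism. Consequently $N$ is the graph of a unique linear map $S: N_0 \to M$; concretely, $N$ has a basis of the form
\begin{equation*}
  g_i = f_i + \sum_{j=1}^{n} S_{ji}\, e_j, \qquad i=1,\ldots,n,
\end{equation*}
for a uniquely determined matrix $S = (S_{ji}) \in \mathbb{Z}_d^{n\times n}$. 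Conversely any such matrix defines an $n$-dimensional subspace $N(S)$ transverse to $M$.

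Next I would translate the Lagrangian condition on $N$ into a condition on $S$. Since any $n$-dimensional isotropic subspace of a $2n$-dimensional symplectic space is automatically Lagrangian, I only need isotropy. A direct computation using bilinearity of $[\cdot,\cdot]$ and the symplectic basis relations gives
\begin{equation*}
  [g_i, g_k] = -S_{ik} + S_{ki},
\end{equation*}
so $N(S)$ is Lagrangian if and only if $S = S^T$, i.e.\ $S$ is a symmetric matrix.

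Finally I would count: the space of symmetric $n\times n$ matrices over $\mathbb{Z}_d$ has dimension $\binom{n+1}{2} = \tfrac{1}{2}n(n+1)$, hence cardinality $d^{n(n+1)/2}$. Combined with the bijection $S \leftrightarrow N(S)$ above, this yields $|\mathcal{T}(M)| = d^{n(n+1)/2}$, which is independent of the choice of $M$ (as expected, since the symplectic group acts transitively on Lagrangians). The only mildly delicate step is verifying the algebraic computation of $[g_i,g_k]$ and confirming that the graph parametrization is exhaustive; the existence of a transverse $N_0$ and the symplectic basis is standard and can be invoked without further comment.
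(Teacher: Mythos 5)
Your proposal is correct and takes essentially the same route as the paper: the paper also fixes a symplectic basis adapted to $M$, puts the generator matrix of a transverse $N$ into the normal form $\left( \begin{array}{cc} A & \mathbb{I} \end{array} \right)^T$ via Gauss--Jordan elimination (which is exactly your graph-of-a-linear-map parametrization in coordinates), and shows that the Lagrangian condition is equivalent to $A=A^T$, giving $d^{\frac{1}{2}n(n+1)}$ symmetric matrices. Your explicit computation $[g_i,g_k]=-S_{ik}+S_{ki}$ and the observation that $n$-dimensional isotropic subspaces are automatically Lagrangian are both sound.
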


\begin{proof}
	Fix $M$ and note that a  subset $N \subset V$ has to be both Lagrangian and transverse to $M$ in order to lie in $\mathcal{T}(M)$.
	These conditions can be made more explicit if we choose a basis $b_1, \ldots, b_{2n}$ of $V$ which obeys 
	\begin{equation*}
	M = \mathrm{span} \left\{ b_1, \ldots , b_n \right\} \quad \textrm{and} \quad \left[b_i, b_j \right] = \delta_{n \oplus i,j} ,
	\end{equation*}
	where
	$\oplus$ denotes addition modulo $2n$.
	Such a basis allows us to fully characterize any subspace $N$ by a $n \times 2n$-generator matrix $G_N$ with column vectors $a_1,\ldots,a_n$ obeying
$
\mathrm{span} \left\{a_1,\ldots,a_n \right\} = N
$.
Moreover, it will be instructive to partition each generator matrix into two $n \times n$ blocks $A$ and $B$, i.e.\ $G_N = \left( 
\begin{array}{c}
A \\
B
\end{array}
\right)$.
Due to our choice of basis the generator matrix $G_M$ of $M$ is particularly simple,
	namely 
	$G_M  = \left( \begin{array}{ccc} \mathbb{I}_{n \times n} & 0_{n \times n} \end{array} \right)^T$.
	Transversality can be restated in terms of these generator matrizes: 
	$M \oplus N = V$ if and only if the $2n \times 2n$-matrix $\left( \begin{array}{ccc} G_M & G_N \end{array} \right) $ has full rank.
	Due to the particular form of $G_M$ this is however equivalent to demanding $\mathrm{rank}(B) = n$.
	Thus we can convert $G_N$ into the equivalent generator matrix 
	$\tilde{G}_N = \left( \begin{array}{ccc} \tilde{A}^T & \mathbb{I}_{n \times n} \end{array} \right)^T$ 
	(and generators $\tilde{a}_1, \ldots \tilde{a}_n$ as above)
	by applying a Gauss-Jordan elimination in the columns of $G_N$. 
	
	The generator matrix $\tilde{G}_N$ characterizes a Lagrangian subspace if and only if $[\tilde{a}_i, \tilde{a}_j ] = 0$ holds for all $i,j=1,\ldots,n$.
	These requirements can be summarized in a single matrix equality, namely
	that $\tilde{G}_N^T J \tilde{G}_N$ must identically vanish. 
	Inserting the particular form of $\tilde{G}_N$ and carrying out the math
	reveals that this is equivalent to demanding 
	that $\tilde{A}^T - \tilde{A}$ must be the zero matrix.
	Hence, a subspace $N$ is a polarization of $M$ if and only if its generator matrix (with respect to the basis chosen above) is Gauss-Jordan equivalent to
	$G_N = \left( \begin{array}{ccc} A & \mathbb{I}_{n \times n} \end{array} \right)^T$,
	where $A$ is a symmetric $n \times n$-matrix over $\mathbb{Z}_d$.
	Therefore there is a one-to-one correspondence between polarizations $N$ of $M$ and symmetric $n \times n$-matrizes over $\mathbb{Z}_d$.
	The dimensionality of the latter is $\frac{1}{2}n(n+1)$ which completes the proof.
\end{proof}

The one-to-one correspondence between polarizations of $M$  and symmetric matrices in this proof gives additional meaning to the set $\mathcal{T}(M)$.
Recall that a stabilizer state $|N, \zeta \rangle$ is a \emph{graph state} if $N$ possesses a generator matrix of the form
$\left( \begin{array}{ccc} A & \mathbb{I}_{n \times n} \end{array} \right)^T$,
where $A$ is a symmetric $n \times n$-matrix.
Hence, $\mathcal{T}(M)$ is the set of all Lagrangian subspaces $N$ which lead to graph states.

The name graph state pays tribute to the fact that $A$ can be interpreted as the adjacency matrix of a (possibly weighted) graph.
Graph states possess a rich structure and many properties of $|N, \zeta \rangle $ can be deduced from the corresponding graph alone.
However, here we content ourselves with pointing out the analogy between graph states and $\mathcal{T}(M)$. 
For further reading we defer the reader to \cite{hein_entanglement_2006}.

Let us now turn to subspaces of the symplectic vector space $V$. It is clear that a proper subspace $W \subset V$ is itself a vector space, however in general it fails to be symplectic. This is due to the fact that the standard symplectic inner product (\ref{eq:symplectic}) of $V$ becomes degenerate if we restrict it to $W$.
Therefore important tools -- such as \autoref{prop:polarizations} -- cannot be directly applied to the proper subspace $W$.
However, this problem can be (partly) circumvent by applying a \emph{linear symplectic reduction}.
For $W \subseteq V$ we define the quotient
\begin{equation}
	\hat W = W/(W^\perp \cap W).	\label{eq:induced_space}
\end{equation}
This space carries the non-degenerate symplectic form 
\begin{equation}\label{eq:induced_form}
	\big[ [v], [w] \big]_{\hat W}:=[v,w]_V	
\end{equation}	
which is easily seen not to depend on the representatives for $[v]$ and $[w]$. 
Consequently, the space $\hat{W}$ endowed with $[\cdot,\cdot]_{\hat{W}}$ is a symplectic vector space.
We will need such a reduction in the proof of \autoref{tm:isotropic_extension}.

\section{Proof of the main Theorem} \label{sec:main_proof}

In this section we show our main result -- \autoref{thm:technical} -- which provides an explicit recursion fully characterizing the frame potential $\mathcal{F}_t (\operatorname{Stabs}(d,n))$ 
of stabilizer states in prime power dimensions $D = d^n$. 
We denote the set of all stabilizer states by $\operatorname{Stabs}(d,n) = \left\{x_1,\ldots,x_{S(d,n)} \right\} \subset \mathbb{C}^D$, where $S(d,n) := |\operatorname{Stabs}(d,n)|$ is just the cardinality of that set.
Recall that in our framework each stabilizer state $x_i \in \mathbb{C}^D$ is specified by a Lagrangian subspace $M$ in $V=\mathbb{Z}_{d}^{2n}$ and a representative $\zeta \in V $ of the coset $[\zeta ]_M \in V/M$.
The Clifford invariance \cite{gross_hudsons_2006} of stabilizer states allows us to calculate any frame potential $\mathcal{F}_t (\operatorname{Stabs}(d,n))$ by counting intersections of Lagrangian subspaces.
This is the content of the following result that considerably simplifies the expression for frame potentials. 

\begin{lemma}	\label{lem:stabilizer_frame_potential}
Let $D =d^n$ be a prime power. The $t$-th frame potential of the set of all stabilizer states in dimension $D$ is given by
\begin{equation}
\mathcal{F}_t (\operatorname{Stabs}(d,n)) = \frac{1}{S(d,n)}\sum_{k=1}^{n}\kappa_M (d,n,k) d^{(1-t)(n-k)},	\label{eq:frame_potential1}
\end{equation}
where  $\kappa_M (d,n,k)$ is the number of Lagrangian subspaces $N$ whose intersection with an arbitrary fixed Lagrangian subspace $M$ is $k$-dimensional. 
\end{lemma}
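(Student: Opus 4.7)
The plan is to expand the frame potential as a double sum over pairs of stabilizer states and reduce it to a purely combinatorial count of Lagrangian intersections. Indexing each stabilizer state by a Lagrangian $M \subset V$ together with a coset $[\zeta]_M \in V/M$, I would write
\[
	\mathcal{F}_t(\operatorname{Stabs}(d,n))
	= \frac{1}{S(d,n)^2} \sum_{M,N} \sum_{[\zeta]_M,[\iota]_N} \big|\langle M,\zeta | N,\iota\rangle\big|^{2t}.
\]
By \autoref{prop:stabilizer_overlap}, the summand equals $d^{t(k-n)}$ with $k := \dim(M \cap N)$ provided $\zeta - \iota$ lies in the symplectic annihilator of $K := M \cap N$, and vanishes otherwise. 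Since $M$ and $N$ are Lagrangian, $K^\perp = (M\cap N)^\perp = M^\perp + N^\perp = M + N$, so the condition simplifies to $\zeta - \iota \in M + N$.

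Next, I would count the cosets. For fixed Lagrangians $M,N$ and a fixed representative $\zeta$, the admissible $\iota$'s lie in $\zeta + (M+N)$, which partitions into $|(M+N)/N| = |M|/|M\cap N| = d^{n-k}$ cosets of $N$ (by the second isomorphism theorem). Letting $[\zeta]_M$ range freely over its $d^n$ choices, the total number of admissible pairs of cosets is $d^{2n-k}$. Grouping the sum over $N$ by intersection dimension therefore gives, for each $M$,
\[
	\sum_{N} d^{2n-k}\,d^{t(k-n)}
	\;=\; \sum_{k} \kappa_M(d,n,k)\, d^{(2-t)n + (t-1)k}.
\]
Here I invoke that the symplectic (and hence Clifford) group acts transitively on Lagrangians, so $\kappa_M(d,n,k)$ is independent of $M$; summing over $M$ then contributes a factor $|\mathcal{L}| = S(d,n)/d^n$, where $\mathcal{L}$ denotes the set of all Lagrangian subspaces. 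Dividing by $S(d,n)^2$ and simplifying the exponent $(2-t)n + (t-1)k - n = (1-t)(n-k)$ recovers the claimed formula.

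The only non-routine steps I anticipate are the symplectic identity $K^\perp = M+N$ for Lagrangians meeting in $K$, and the coset count $|(M+N)/N| = d^{n-k}$. Both follow from standard symplectic linear algebra, and in the even-$d$ case the compatible-bases construction of \autoref{lem:compatible_bases} ensures that all Weyl-operator products appearing in the expansion of $|\langle M,\zeta|N,\iota\rangle|^2$ assemble consistently. With these ingredients in place, the remainder is pure bookkeeping.
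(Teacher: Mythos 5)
Your proposal is correct and follows essentially the same route as the paper: both rest on \autoref{prop:stabilizer_overlap}, the count of $d^{n-k}$ admissible cosets per pair of Lagrangians, and regrouping by the intersection dimension $k$. The only cosmetic difference is that the paper invokes Clifford transitivity on stabilizer states at the outset to fix a reference state $|M,0\rangle$ and collapse one sum, whereas you keep the full double sum and factor out the number of Lagrangians at the end via transitivity of the symplectic group (your identification $K^\perp = M+N$ and the quotient count $|(M+N)/N|=d^{n-k}$ match the paper's $|K^\perp/N|=d^{n-k}$).
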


\begin{proof}

Stabilizer states constitute an orbit of a particular finite unitary group -- the Clifford group. 
Due to this symmetry, the second summation in $\mathcal{F}_t (\operatorname{Stabs}(d,n))$ is 
superfluous and we can write
\begin{align}
\mathcal{F}_t (\operatorname{Stabs}(d,n)) =& \frac{1}{S(d,n)^2} \sum_{i,j=1}^N \left| \langle x_i, x_j \rangle \right|^{2t} \nonumber \\
=& \frac{1}{S(d,n)} \sum_{i=1}^N \left| \langle x_k, x_i \rangle \right|^{2t}, \label{eq:stabilizer_frame_potential_aux1}
\end{align}
where $x_k \in \operatorname{Stabs}(d,n)$ is an arbitrary fixed stabilizer state. \autoref{th:stabilizer_states} assures that any such $x_k$ is unambiguously specified by a Lagrangian subspace $M$ of $V$ and coset $[\zeta ]_M \in M/V$.
Since the choice of $x_k$ in \eqref{eq:stabilizer_frame_potential_aux1} was arbitrary, we can choose $x_k = |M,0 \rangle$ -- i.e.\  it is specified by $M$ and the particularly simple representative $0 \in V$ of the coset $[0]_M$.
Such a choice of $x_k$ together with \autoref{th:stabilizer_states} allows us to rewrite \eqref{eq:stabilizer_frame_potential_aux1} as
\begin{equation}
\mathcal{F}_t (\operatorname{Stabs}(d,n)) = \frac{1}{S(d,n)} \sum_{N \textrm{ LAG}} \sum_{[\zeta ]_N \in V/N} \left| \langle N, \zeta | M, 0 \rangle \right|^{2t}, \label{eq:stabilizer_frame_potential_aux2}
\end{equation}
because instead of summing over stabilizer states, we may as well sum over their characterizing Lagrangian subspaces and cosets instead.
Such a reformulation allows us to employ \autoref{prop:stabilizer_overlap} which implies 
\begin{equation*}
\left| \langle N ,\zeta | M,0 \rangle \right|^{2t}  
= \begin{cases}
d^{-nt}|K |^t& \textrm{if }  [\zeta, m] = 0 \; \forall m \in K, 					\\
0 & \textrm{else},
\end{cases}
\end{equation*}
where $K = M \cap N$ denotes the intersection.
If this intersection is $k$-dimensional, $|K| = d^k$ and consequently
$\left| \langle N, \zeta | M, 0 \rangle \right|^{2t} = d^{-t (n-k)}$, provided that $[\zeta, m ] = 0 $ for all elements $m \in K$. 
This requirement for a non-vanishing overlap is met if and only if $\zeta \in K^\perp$. 
The number of representatives $\zeta$ which obey this property (and single out different stabilizer states) is given by the order of the quotient space $| K^\perp / N|$.
Since $N \subseteq K^\perp$ (which follows from $K \subseteq N$ and $N^\perp = N$), such a quotient space is well defined and its order amounts to
\begin{equation*}
| K^\perp / N | = d^{\dim \left( K^\perp / N \right)} = d^{2n - k - n} = d^{n-k}.
\end{equation*}
Consequently, for each pair of Lagrangians $M,N$ with $k$-dimensional intersection, $d^{n-k}$ out of a total of $d^n$ stabilizer states specified by $N$ give rise to a non-vanishing 
overlap $\left| \langle N, \zeta| M , 0 \rangle \right|^{2t} = d^{-t(n-k)}$ with the fixed stabilizer state $x_k = |M,0 \rangle$. 
Inserting this insight into \eqref{eq:stabilizer_frame_potential_aux2} reveals
\begin{align*}
\mathcal{F}_t (\operatorname{Stabs}(d,n))
=& \frac{1}{S(d,n)} \sum_{N \textrm{ LAG}} d^{(1-t) (n - \dim (N \cap M))} \\
=& \frac{1}{S(d,n)} \sum_{k=1}^N \kappa_M (d,n,k) d^{(1-t)(n-k)},
\end{align*}
where we have replaced the summation over the different Lagrangian subspaces with an equivalent summation over the dimension $k$ of the intersections $M \cap N$. 
\end{proof}

\autoref{lem:stabilizer_frame_potential} shows that we can compute the stabilizer frame potential $\mathcal{F}_{t} (\operatorname{Stabs}(d,n))$ provided that the number $\kappa_M (d,n,k)$ is known for any Lagrangian subspace $M$ 
and any intersection space dimesion $k\in \left\{0,\ldots,n\right\}$. The following two statements characterize that number.

\begin{theorem}	\label{tm:isotropic_extension}
	Let $V$ be a $2n$-dimensional symplectic space over $\mathbbm{Z}_d$.
	Fix an arbitrary Lagrangian subspace $M$ and a $k$-dimensional subspace $K$ of $M$. The number of Lagrangian
	subspaces $N$ that obey $M\cap N = K$
	equals
	\begin{equation*}
		\mathcal{T}(d,n-k)
		=
		d^{\frac12(n-k)(n-k+1)}.
	\end{equation*}
\end{theorem}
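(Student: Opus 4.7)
My plan is to reduce the count to an application of Proposition \ref{prop:polarizations} by passing to the symplectic reduction at $K$. Set $W := K^\perp \subseteq V$. Since $K$ is a subspace of the Lagrangian $M$, it is isotropic, so $K \subseteq K^\perp = W$; moreover $K \subseteq M = M^\perp \subseteq K^\perp$, giving $M \subseteq W$ as well. Because $K$ is $k$-dimensional and isotropic, $\dim W = 2n-k$ and $W \cap W^\perp = K$, so the quotient $\hat{W} = W/K$ from \eqref{eq:induced_space} carries the non-degenerate symplectic form \eqref{eq:induced_form} and is a symplectic space of dimension $2(n-k)$. Under the quotient map, $M/K \subseteq \hat W$ is isotropic of dimension $n-k$, hence a Lagrangian of $\hat W$.

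Next, I would set up a bijection between
\begin{equation*}
\mathcal{N} := \{N \subset V \text{ Lagrangian} : M\cap N = K\}
\end{equation*}
and the set $\mathcal{T}(M/K) \subset \hat W$ of Lagrangians of $\hat W$ transverse to $M/K$. For the forward direction, take $N \in \mathcal{N}$. Then $K = M \cap N \subseteq N$, and since $N$ is isotropic and contains $K$, every element of $N$ annihilates $K$ symplectically, i.e.\ $N \subseteq K^\perp = W$. Therefore $N/K \subseteq \hat W$ is well-defined, isotropic (the symplectic form on $\hat W$ is inherited from that on $V$), of dimension $n-k$, hence Lagrangian in $\hat W$; transversality to $M/K$ follows from $(M/K)\cap(N/K) = (M\cap N)/K = K/K = 0$.

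For the inverse direction, given $L \in \mathcal{T}(M/K)$, let $N$ be its preimage in $W$ under the projection $W \twoheadrightarrow \hat W$. Then $K \subseteq N$ and $\dim N = \dim L + \dim K = n$. Isotropy of $N$ in $V$ follows from isotropy of $L$ in $\hat W$ together with the definition \eqref{eq:induced_form} of the induced form, so $N$ is Lagrangian; and $(M\cap N)/K = (M/K)\cap L = 0$ combined with $K \subseteq M \cap N$ yields $M\cap N = K$. These two constructions are mutually inverse, giving the desired bijection.

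Finally, applying Proposition \ref{prop:polarizations} to the symplectic space $\hat W$ of dimension $2(n-k)$ with distinguished Lagrangian $M/K$ yields
\begin{equation*}
|\mathcal{N}| = |\mathcal{T}(M/K)| = \mathcal{T}(d, n-k) = d^{\frac12(n-k)(n-k+1)},
\end{equation*}
as required. The only delicate step is verifying the containment $N \subseteq K^\perp$ in the forward direction and the corresponding well-definedness of the preimage construction --- both follow cleanly from the observation that $K \subseteq N$ forces $N$ to pair trivially with $K$ by isotropy --- so I do not expect serious obstacles. The bulk of the work has already been carried out in Proposition \ref{prop:polarizations}; this theorem is essentially the statement that counting Lagrangians with a prescribed intersection is a transversality count in the symplectic reduction.
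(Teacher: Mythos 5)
Your proposal is correct and follows essentially the same route as the paper: both pass to the symplectic reduction $\hat{W}=K^\perp/K$, identify Lagrangians $N$ with $M\cap N=K$ bijectively with Lagrangians of $\hat{W}$ transverse to $M/K$, and invoke Proposition~\ref{prop:polarizations}. The only cosmetic difference is that you verify transversality via the identity $(M/K)\cap(N/K)=(M\cap N)/K$ where the paper uses a dimension count; both are valid.
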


The fact that each Lagrangian $M$ admits $|\mathcal{G}(d,n,k)| = \binom{n}{k}_d$
different $k$-dimensional subspaces $K$ (formula (\ref{eq:gaussian_binomial}))  immediately yields the following corollary.

\begin{corollary}[Expression for $\kappa_M(d,n,k)$ ]	\label{cor:kappa}
Let $V$ be a $2n$-dimensional symplectic space over $\mathbbm{Z}_d$.
For an arbitrary Lagrangian subspace $M \subset V$ and $k \in \left\{0,\ldots,n \right\}$, the number of Lagrangian subspaces $N$ whose intersection with $M$ is $k$-dimensional amounts to
\begin{equation}
\kappa_M (d,n,k) = \binom{n}{k}_d  d^{\frac{1}{2}(n-k)(n-k+1)}.
\end{equation}
\end{corollary}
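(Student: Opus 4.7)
The plan is to prove this by partitioning the family of Lagrangian subspaces counted by $\kappa_M(d,n,k)$ according to which $k$-dimensional subspace of $M$ they actually meet. Explicitly, every Lagrangian subspace $N$ with $\dim(N\cap M)=k$ singles out a unique $k$-dimensional subspace $K:=N\cap M\subseteq M$. Conversely, once we fix such a $K$, \autoref{tm:isotropic_extension} tells us exactly how many Lagrangians $N$ realize $N\cap M=K$, namely $\mathcal{T}(d,n-k)=d^{\frac12(n-k)(n-k+1)}$. Crucially, this count depends only on the codimension $n-k$ of $K$ inside $M$, not on the particular $k$-dimensional subspace $K$ we selected.

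Next I would count the number of candidate subspaces $K$. Since $M$ is an $n$-dimensional $\mathbb{Z}_d$-vector space, the $k$-dimensional subspaces of $M$ are parameterized by the Grassmannian $\mathcal{G}(d,n,k)$, whose cardinality is $\binom{n}{k}_d$ by formula \eqref{eq:gaussian_binomial}. Because the fibers of the map $N\mapsto N\cap M$ (restricted to Lagrangians with $k$-dimensional intersection) are disjoint, each of size $d^{\frac12(n-k)(n-k+1)}$, the total count is the product
\begin{equation*}
\kappa_M(d,n,k)=\binom{n}{k}_d\, d^{\frac12(n-k)(n-k+1)},
\end{equation*}
which is precisely the claim.

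There is essentially no obstacle once \autoref{tm:isotropic_extension} is in hand: the whole argument is a bijective decomposition plus multiplication. The only point that warrants verification is the disjointness of the fibers, i.e.\ that distinct $K$ give disjoint sets of $N$'s, but this is immediate from the very definition $K=N\cap M$.
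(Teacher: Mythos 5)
Your argument is correct and is exactly the one the paper uses: the corollary follows from \autoref{tm:isotropic_extension} by summing the fiber count $d^{\frac12(n-k)(n-k+1)}$ over the $\binom{n}{k}_d$ choices of the $k$-dimensional subspace $K\subseteq M$, with disjointness of the fibers immediate from $K=N\cap M$. Nothing is missing.
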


\begin{proof}[Proof of \autoref{tm:isotropic_extension}]

	We need to count in
	how many ways one can choose a Lagrangian space $N\subset V$ that
	intersects $M$ exactly in $K$. Our strategy will be to relate the
	set of such extensions $N$ of $K$ to a set $\mathcal{T}$  as in
	\autoref{prop:polarizations}. 	
	To that end, set $\hat{W} := K^\perp/K$. Note that $K \subseteq K^\perp$ (because $K \subseteq M$ and $M$ is Lagrangian) implies
	\begin{equation*}
		\hat{W}
		= K^\perp / K
		= K^\perp / \big( K \cap K^\perp \big)
		= K^\perp / \big( (K^\perp)^\perp \cap K^\perp \big).
	\end{equation*}
	Therefore $\hat{W}$ is the linear symplectic reduction of $K^\perp$ as defined in (\ref{eq:induced_space}).  
	The space $\hat{W}$ endowed with the induced symplectic product  $[\cdot, \cdot]_{\hat{W}}$  defined in (\ref{eq:induced_form}) 
	forms a symplectic vector space with dimension 
	\begin{equation*}
		\dim \hat{W} = \dim K^\perp / K = 2n-k-k = 2(n-k).
	\end{equation*}
	Note that any isotropic space $N$ containing $K$ is in particular
	contained in $K^\perp$.  The canonical projection $N\mapsto N/K$
	sets up a one-to-one correspondence between $n$-dimensional subspaces
	of $K^\perp$ containing $K$ and $(n-k)$-dimensional subspaces of
	$\hat{W}$. We need two properties of this correspondence:

	(\emph{i}) $N/K\subset \hat{W}$ is isotropic if and only if $N\subset V$
	is. Proof: This follows immediately from (\ref{eq:induced_form}).

	(\emph{ii}) $N/K\subset \hat{W}$ is transverse to $M/K$ if and only if
	$M\cap N=K$. Proof: Basic linear algebra shows
	\begin{equation*}
		(M + N)/K \simeq M/K + N/K.
	\end{equation*}
	For the left hand side:
	\begin{align*}
		\dim(M+N)&=\dim(M)+\dim(N)-\dim(M\cap N) \\
		& \leq 2n-k
	\end{align*}
	with equality if and only if $M\cap N=K$. Hence
	$\dim(M+N)/K\leq2(n-k)$ with the same condition for equality.
	For the right hand side:
	\begin{align*}
		\dim(M/K)+\dim(N/K) 
		&\leq \dim M + \dim N - 2 \dim K \\
		&= 2 (n-k)
	\end{align*}
	with equality if and only if the two spaces are transverse.

	It follows that $M/K$ is a Lagrangian subspace of $\hat{W}$ and there is a
	one-to-one correspondence between Lagrangian spaces $N$ intersecting
	$M$ in $K$ and Lagrangian subspaces of $\hat{W}$ 
	transverse to $M/K$.
	Employing \autoref{prop:polarizations} then yields the desired result.
\end{proof}

Finally, we are going to require an explicit characterization of the
number $S(d,n)$ of stabilizer states. We borrow it from
\cite[Corollary 21]{gross_hudsons_2006}:

\begin{proposition}[Number of stabilizer states]	\label{prop:number_of_stabilizer_states}
For $\mathcal{H}=\left(\mathbb{C}^d\right)^{\otimes n}$, the cardinality $S(d,n)$ of $\operatorname{Stabs}(d,n)\subset \mathcal{H}$ amounts to
\begin{equation}
S(d,n) = |\operatorname{Stabs}(d,n)| = d^n \prod_{j=1}^{n} \left(d^j+1\right)	\label{eq:number_of_stabilizer_states}
\end{equation}
and thus obeys the recursion
\begin{equation}
\frac{S(d,n)}{S(d,n+1)} = \frac{1}{(d^{n+1}+1)d}. \label{eq:cardinality_recursion}
\end{equation}
\end{proposition}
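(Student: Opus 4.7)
The plan is to decompose the count into two independent choices. By \autoref{th:stabilizer_states}, each stabilizer state is specified by a pair consisting of a Lagrangian subspace $M\subset V$ and a coset $[\zeta]_M \in V/M$. Since $\dim M = n$ forces $|V/M| = d^n$, this yields the clean factorization
\begin{equation*}
S(d,n) = d^n\cdot L(d,n),
\end{equation*}
where $L(d,n)$ denotes the number of Lagrangian subspaces of $V\cong \mathbb{Z}_d^{2n}$. The proposition therefore reduces to showing $L(d,n) = \prod_{j=1}^n (d^j+1)$.

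To establish this product formula I would argue by induction on $n$ via a double count of incidence pairs $(\ell, M)$, where $\ell\subset V$ is a one-dimensional subspace and $M\supset \ell$ is Lagrangian. Every line is automatically isotropic because $[v,v]=0$, so every line sits inside at least one Lagrangian. Counting pairs by $M$ first gives $L(d,n)\cdot(d^n-1)/(d-1)$, since each Lagrangian contains that many lines. Counting by $\ell$ first uses the symplectic reduction $\hat W := \ell^\perp/\ell$ introduced in the Grassmannian subsection: because $\ell\subseteq \ell^\perp$, $\hat W$ is a $2(n-1)$-dimensional symplectic space, and the canonical projection $M\mapsto M/\ell$ sets up a bijection between Lagrangians of $V$ containing $\ell$ and Lagrangians of $\hat W$. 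Since $V$ contains $(d^{2n}-1)/(d-1)$ lines in total, equating the two counts yields the recursion
\begin{equation*}
L(d,n) = \frac{d^{2n}-1}{d^n-1}\,L(d,n-1) = (d^n+1)\,L(d,n-1),
\end{equation*}
which together with $L(d,0)=1$ solves to the claimed product. The second part of the proposition, $S(d,n)/S(d,n+1) = 1/[(d^{n+1}+1)d]$, then follows by a direct arithmetic.

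The main technical step is justifying the correspondence $M \leftrightarrow M/\ell$. One must check that any Lagrangian $M$ containing $\ell$ automatically lies in $\ell^\perp$ (immediate from $M = M^\perp$), that the projection preserves isotropy in both directions (which is built into the definition of the induced form $[\cdot,\cdot]_{\hat W}$), and that the dimension drops by exactly one so that $M/\ell$ realizes the maximal isotropic dimension $n-1$ in $\hat W$. These verifications use exactly the same ingredients already deployed in the proof of \autoref{tm:isotropic_extension}, so the heart of the argument really is the double-counting recursion; the supporting geometric machinery requires no fresh ideas.
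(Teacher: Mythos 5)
Your argument is correct, and it necessarily differs from the paper's treatment because the paper does not prove this proposition at all: it imports the count verbatim from Corollary~21 of the cited reference. Your self-contained derivation uses only machinery already present in the paper. The factorization $S(d,n)=d^n\cdot L(d,n)$ is exactly what \autoref{th:stabilizer_states} supplies (injectivity of $(M,[\zeta]_M)\mapsto |M,\zeta\rangle$ is implicit there, since $M$ is recoverable as the support of $\rho_{M,\zeta}$ in the linearly independent Weyl basis), and the double count of incidences $(\ell,M)$ is sound: each Lagrangian contains $(d^n-1)/(d-1)$ lines, $V$ contains $(d^{2n}-1)/(d-1)$ lines, every line is isotropic because $[v,v]=0$, and the bijection $M\mapsto M/\ell$ with Lagrangians of $\hat W=\ell^\perp/\ell$ is precisely the $k=1$ instance of the correspondence established in the proof of \autoref{tm:isotropic_extension}. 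Equating the two counts gives $L(d,n)=(d^n+1)\,L(d,n-1)$, hence the product formula, and \eqref{eq:cardinality_recursion} is immediate arithmetic. For comparison, a route entirely internal to the paper's own results would be to sum \autoref{cor:kappa} over $k$, i.e.\ $L(d,n)=\sum_{k=0}^{n}\binom{n}{k}_d\, d^{\frac12(n-k)(n-k+1)}$, and verify the product formula by the same Pascal-identity manipulation used in the proof of \autoref{thm:technical}; your incidence count avoids that computation and is arguably cleaner.
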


Formula (\ref{eq:number_of_stabilizer_states}) combined with \autoref{cor:kappa} 
allows us to write down the frame potential (\autoref{lem:stabilizer_frame_potential}) explicitly:
\begin{equation}
	\mathcal{F}_t (\operatorname{Stabs}(d,n)) = \frac{1}{S(d,n)} \sum_{k=0}^{n} \binom{n}{k}_d d^{\frac{1}{2}(n-k)(n-k+3-2t)}	\label{eq:final_frame_potential}
\end{equation}
with $S(d,n)$ defined in \eqref{eq:number_of_stabilizer_states}.
Note that this is a purely combinatorical expression that depends solely on $d$ and $n$. 
Analyzing its recursive dependence on $n$ allows us to establish the main result of this work -- \autoref{thm:technical}.

\begin{proof}[Proof of \autoref{thm:technical}]
Let us start with the base case \eqref{eq:base_case} which is readily established. 
Indeed, setting $n=1$ and evaluating formula \eqref{eq:final_frame_potential} reveals 
that for any $d$ and $t$ $\mathcal{F}_t (\operatorname{Stabs}(d,n))$ amounts to
\begin{equation*}
 \frac{1}{(d+1)d} \left( \binom{1}{0}_d d^{\frac{1}{2} (4-2t)} + \binom{1}{1}_d \right) 
= \frac{d^{2-t} +1}{(d+1)d},
\end{equation*}
where we have used $\binom{n}{0}_d = \binom{n}{n}_d = 1$. Let us now move on to establishing the recursive behavior. Replacing $n$ by $(n+1)$ in formula \eqref{eq:final_frame_potential}
and employing Pascal's identity \eqref{eq:gaussian_pascal1} as well as trivial coefficients for Gaussian binomials 
yields
\begin{widetext}
\begin{align}
 \mathcal{F}_t \left(\operatorname{Stabs}(d,n+1)\right) 
=&  \frac{1}{S(d,n+1)}\sum_{k=0}^{n+1} \binom{n+1}{k}_d 
d^{\frac{1}{2} (n+1-k)(n+1-k+3-2t)} \nonumber \\
=&  \frac{1}{S(d,n+1)}
\left( \binom{n+1}{0}_d d^{\frac{1}{2}(n+1)(n+4-2t)}
+ \sum_{k=1}^n  \binom{n+1}{k}_d d^{\frac{1}{2} ( n+1-k)(n-k+4 - 2t)}
+ \binom{n+1}{n+1}_d \right)\nonumber \\
=&   \frac{1}{S(d,n+1)} \left( d^0 \binom{n}{0}_d d^{\frac{1}{2}(n+1)(n+4-2t)}
+ \sum_{k=1}^n \left( d^k \binom{n}{k}_d + \binom{n}{k-1}_d \right) d^{\frac{1}{2} (n+1-k)(n-k+4-2t)}
+ \binom{n}{n}_d \right) \nonumber \\
=&  \frac{1}{S(d,n+1)}
\left(  \sum_{k=0}^n d^k \binom{n}{k}_d d^{\frac{1}{2} (n+1-k)(n-k+4-2t)}
+ \sum_{k=1}^{n+1} \binom{n}{k-1} d^{\frac{1}{2} (n - (k-1))(n-(k-1)+3-2t)} \right),
\label{eq:thm_aux1}
\end{align}
\end{widetext}
where we have encorporated the first and last terms in the first and second summation, respectively. 
Note that the second summation just corresponds to
$\sum_{k=0}^n \binom{n}{k}_d d^{\frac{1}{2} (n-k)(n-k+3-2t)}$ -- which in that very form also appears in \eqref{eq:final_frame_potential}. Importantly, a similar equivalence is true for the first sum appearing in 
\eqref{eq:thm_aux1}. Taking a closer look at the overall exponent of $d$ in that summation
reveals
\begin{align*}
& k + \frac{1}{2} (n+1-k)(n-k+4-2t) \\
=& n - (t-2)  + \frac{1}{2} (n-k)(n-k+3-2t)
\end{align*}
and the first term is independent of the summation index. 
Consequently the first sum in \eqref{eq:thm_aux1} actually corresponds to
$d^{n-(t-2)} \sum_{k=0}^n \binom{n}{k}_d d^{\frac{1}{2}(n-k)(n-k+3-2t)}$
and we can conclude
\begin{align*}
& \mathcal{F}_t ( \operatorname{Stabs}(d,n+1) ) \\
=& \frac{S(d,n)}{S(d,n+1)}
\left( d^{n - (t-2)}+ 1 \right) \\
 \times & \frac{1}{S(d,n)}\sum_{k=0}^n \binom{n}{k}_d d^{\frac{1}{2} (n-k)(n-k+3-2t)} \\
=& \frac{d^{n-(t-2)} + 1}{d (d^{n+1}+1)} 
\frac{1}{S(d,n)} \sum_{k=0}^n \binom{n}{k}_d d^{\frac{1}{2} (n-k)(n-k+3-2t)} \\
=&  \frac{d^{n-(t-2)} + 1}{d (d^{n+1}+1)}  \mathcal{F}_t \left( \operatorname{Stab}(d,n) \right)
\end{align*}
where we have employed \eqref{eq:cardinality_recursion}.
\end{proof}

We conclude this article with presenting a proof of \autoref{cor:stabs3d} which establishes some substantial insights into the structure of stabilizer states.

\begin{proof}[Proof of \autoref{cor:stabs3d}]
	Start with the case $t=2$. Then the result of
	\autoref{thm:technical} reads
	\begin{align*}
		\mathcal{F}_2 (\operatorname{Stabs}(d,1)) 
		=& \frac2{d(d+1)} \\
		\frac{
			\mathcal{F}_2 (\operatorname{Stabs}(d,n+1))
		}{
			\mathcal{F}_2 (\operatorname{Stabs}(d,n))
		}
		=& \frac{ d^n+1 }{ d(d^{n+1}+1) }.
	\end{align*}
	But the Welch Bound (\ref{eq:sidelnikov_inequality})
	satisfies identical relations:
	\begin{align}
		\mathcal{W}_2(d)
		=& \frac2{d(d+1)} \\
		\frac{
			\mathcal{W}_2 (d,n+1)
		}{
			\mathcal{W}_2 (d,n)
		}
		=& 
		\frac{
			\binom{d^n+1}{2}
		}{
			\binom{d^{n+1}+1}{2}
		}
		=
		\frac{
			(d^n+1)d^n
		}{
			(d^{n+1}+1)d^{n+1}
		}
	\end{align}

The 3-design case can be proved along similar lines. We have
\begin{align}
\mathcal{F}_3 (\operatorname{Stabs}(d,1)) =& \frac{1+d^{-1}}{(d+1)d} \label{eq:3design_frame}\\
\frac{\mathcal{F}_3 (\operatorname{Stabs}(d,n+1))}{
\mathcal{F}_3 (\operatorname{Stabs}(d,n))}
= & \frac{d^{n-1}+1}{d(d^{n+1}+1)} \label{eq:frame_3_design}
\end{align}
and the Welch bound satisfies
\begin{align}
\mathcal{W}_3 (d)
=& \binom{d+2}{3}^{-1} = \frac{6}{(d+2)(d+1)d} \label{eq:3design_welch}  \\
\frac{\mathcal{W}_3 (d^{n+1})}{\mathcal{W}_3 (d^n)}
=& \frac{(d^n+2)(d^n+1)}{(d^{n+1}+2)(d^{n+1}+1)d}.  \label{eq:welch_3_design}
\end{align}
The two base values \eqref{eq:3design_frame} and \eqref{eq:3design_welch} coincide for $d \leq 2$.
Otherwise, the former is strictly larger than the latter. 
Comparing the recursion factors yields
\begin{align}
\frac{\textrm{Eq.\ } \eqref{eq:welch_3_design}}
{ \textrm{Eq.\ } \eqref{eq:frame_3_design}}
=& \frac{(d^n+2)(d^n+1)}{(d^{n+1}+2)((d^{n-1}+1)} \\
=& \frac{d^{2n} + 3d^n + 2}{d^{2n} + (2/d + d)d^n +2}
\leq 1
\end{align}
with equality if and aonly if $d=1,2$. 
Consequently we have $\mathcal{F}_3 (\operatorname{Stabs}(d,n)) = \mathcal{W}_3 (d^n)$
for any $n \in \mathbb{N}_+$ if and only if $d \leq 2$.

Finally, let us move on the the 4-design case, where we have
\begin{align}
\mathcal{F}_4 \left( \operatorname{Stabs}(d,1) \right)
=& \frac{1+d^{-2}}{(d+1)d}, \label{eq:4design_base}\\
\frac{\mathcal{F}_4 \left( \operatorname{Stabs}(d,n+1) \right)}{\mathcal{F}_4 \left( \operatorname{Stabs}(d,n) \right)}
=& \frac{d^{n-2}+1}{(d^{n+1}+1)d}, \label{eq:4design_rec} \\
\end{align}
and
\begin{align}
\mathcal{W}_4 (d) =& \frac{24}{(d+3)(d+2)(d+1)d} \label{eq:4welch_base}\\
\frac{\mathcal{W}_4 (d^{n+1})}{\mathcal{W}_4 (d^n)}
=& \frac{(d^n+3)(d^n+2)(d^n+1)}{(d^{n+1}+3)(d^{n+1}+2)(d^{n+1}+1)d}. \label{eq:4welch_rec}
\end{align}
Comparing \eqref{eq:4design_base} to \eqref{eq:4welch_base} reveals
$\mathcal{F}_4 \left( \operatorname{Stabs}(d,1) \right) \geq \mathcal{W}_4 \left( d \right)$ with equality
if and only if $d=1$. An analogous relation holds for \eqref{eq:4design_rec} and \eqref{eq:4welch_rec}
which assures that $\mathcal{F}_4 (\operatorname{Stabs}(d,n) )$ and $\mathcal{W}_4 (d^n)$ 
only ever coincide in the trivial case $d=1$. 

For the final claim of \autoref{cor:stabs3d}, note that the set of
stabilizer states in prime-power dimensions form one orbit under the
action of the Clifford group \cite{gross_hudsons_2006}. Also, any orbit
of a unitary $t$-design is a complex projective $t$-design
\cite{dankert_exact_2009, gross_evenly_2007}. Thus Claim~3 implies that the
Clifford group is not a 4-design. Peter Turner has made us aware of
the fact that the frame potential of group orbits only depends on the
action of that group on the totally symmetric space
$\operatorname{Sym}^t(\mathbb{C}^D)$. Following the reasoning of
\cite{gross_evenly_2007}, a group acting irreducibly on that space
has the property that any orbit constitutes a complex projective
$t$-design. Thus, the stronger statement in Claim~4 is also implied by
Claim~3.
\end{proof}

{\bf Acknowledgements}: 
The authors want to thank P.~Turner for insightful discussions and
H.~Zhu, as well as Z.~Webb for informing us of their impeding work
\cite{zhu2015,webb2015}.

The work of DG and RK is supported by the Excellence Initiative of the
German Federal and State Governments (Grants ZUK 43 \& 81), the ARO
under contract W911NF-14-1-0098 (Quantum Characterization,
Verification, and Validation), the DFG projects GRO 4334/1,2 (SPP1798
CoSIP), and the State Graduate Funding Program of Baden-W\"urttemberg. 

\bibliographystyle{ieeetr}
\bibliography{stabs}

\end{document}